\newcommand{\pf}{{\sc Proof} }
 \newcounter{ncomm}
\newcommand\pnote[1]{\textcolor{red}{\textbf{#1}}}
\newcommand\poldnote[1]{\emph{\textcolor{black}{\textbf{#1}}}}
\renewcommand\pnote[1]{}
\renewcommand\poldnote[1]{}
\newcommand{\oStar}{\ensuremath{o^*}\xspace}
\begin{document}
\pagestyle{plain}

\title{Mobile Agents Rendezvous in spite of a Malicious Agent~\thanks{This work has been  partially supported by the PRIN 2010 Project {\em Security Horizons}, and by the ANR - MACARON project (anr-13-js02-0002).} 
}

\author{
Shantanu Das\inst{1} 
\and
Flaminia L. Luccio\inst{2}
\and
Euripides Markou\inst{3}
}

\institute{%
LIF, Aix-Marseille University, Marseille, France 
\and
DAIS, Universit\`a Ca' Foscari Venezia, Venezia, Italy 
\and Department of Computer Science and Biomedical Informatics, \\ University of Thessaly, Lamia, Greece 
}

\maketitle

\setcounter{footnote}{0}

\begin{abstract}
We examine the problem of rendezvous, i.e., having multiple mobile agents gather in a single node of the network.
Unlike previous studies, we need to achieve rendezvous in presence of a very powerful adversary, a malicious agent that moves through the network and tries to block the \emph{honest} agents and prevents them from gathering. The malicious agent can be thought of as a \emph{mobile fault} in the network. The malicious agent is assumed to be arbitrarily fast, has full knowledge of the network and it cannot be exterminated by the honest agents.
On the other hand, the honest agents are assumed to be quite weak: They are asynchronous and anonymous, they have only finite memory, they have no prior knowledge of the network and they can communicate with the other agents only when they meet at a node. 
Can the honest agents achieve rendezvous starting from an arbitrary configuration in spite of the malicious agent? We present some necessary conditions for solving rendezvous in spite of the malicious agent in arbitrary networks. We then focus on the ring and mesh topologies and provide algorithms to solve rendezvous. For ring networks, our algorithms solve rendezvous in all feasible instances of the problem, while we show that rendezvous is impossible for an even number of agents in unoriented rings. For the oriented mesh networks, we prove that the problem can be solved when the honest agents initially form a connected configuration without holes if and only if they can see which are the occupied nodes within a two-hops distance. To the best of our knowledge, this is the first attempt to study such a powerful and mobile fault model, in the context of mobile agents. Our model lies between the more powerful but static fault model of \emph{black holes} (which can even destroy the agents), and the less powerful but mobile fault model of \emph{Byzantine agents} (which can only imitate the honest agents but can neither harm nor stop them). \\

{\bf Keywords:}  Asynchronous; mobile agents; rendezvous problem; malicious agent.

\end{abstract}

\section{Introduction}
\label{intro}

One of the fundamental problems in distributed computing with mobile robots or agents is the problem of gathering all agents at a single location, known as the \emph{rendezvous} problem. Rendezvous is important for example, for coordination between the agents or for sharing information or for planning a collaborative task. 
This  problem has been well studied for the fault-free environment but there are very few results on solving rendezvous in the presence of faults, in particular, in the presence of a hostile entity that could prevent the agents from achieving their task.  
As in most previous works, we model the environment as a connected graph with multiple mobile agents moving along the edges of the graph; the objective is to gather them at a single node of the graph. In this context, the hostile entity may be either stationary (e.g. a harmful node in the graph) or mobile (e.g. a virus propagating on a network). Methods for protecting mobile agents from malicious host nodes have been proposed, e.g. based on the identification of the malicious host~\cite{DBLP:journals/algorithmica/DobrevFPS07}. However, the issue of protecting a network (hosts and mobile agents) from a malicious and mobile entity is still wide open (see \cite{:FlocchiniS2012} and references therein).

A model for a particularly harmful node
which has been extensively studied is the \emph{black hole},  where a node which contains a stationary process destroys all mobile agents upon visiting it, without leaving any trace. In this case, although the hostile entity is very powerful, it is stationary; the mobile agents can simply avoid the black hole once its location is known. Thus, the main issue is locating the black hole \cite{DBLP:journals/algorithmica/DobrevFPS07,DBLP:journals/algorithmica/FlocchiniIS12,kmrs07}. Locating and avoiding a malicious entity that is also mobile and moves from node to node of the graph, seems to be a more difficult problem.
A recent result considers the problem of rendezvous in the presence of \emph{Byzantine agents}~\cite{dpp14}. A Byzantine agent is indistinguishable from the legitimate or \emph{honest} agents, except that it may behave in an arbitrary manner and may provide false information to the honest agents in order to induce them to make mistakes, thus preventing the rendezvous of the honest agents. Thus, the issue here is identifying the Byzantine agents and distinguishing them from the honest agents. Note that the Byzantine agent cannot actively harm the agent or physically  prevent the agents from gathering. In this paper, we consider a more powerful adversary called a \emph{malicious agent} which can actively block the movement of an honest agent to the node occupied by the malicious agent. For example, when two honest agents are close to each other, the malicious agent can enter the path between the two agents and prevent them from meeting. We investigate the feasibility of rendezvous in the presence of such a powerful adversary. In particular, the malicious agent is more powerful than the honest agents; it can move arbitrarily fast through the graph, has full information about the current configuration (i.e. the graph and location of the agents), and has knowledge of the next action to be taken by each honest agent. On the other hand, the honest agents are relatively weak; they are anonymous finite automata, they move asynchronously without any prior knowledge of the graph and they can communicate only locally on meeting another agent at the same node.
We remark here that the malicious agent is distinguishable from the honest agents, so the question of identifying the malicious agent (as in Dieudonne et al.~\cite{dpp14}), does not arise here.

We believe this is an interesting model for studying mobile faults in a graph, that has never been considered before. In some sense this model can be seen as an extension of the model of networks with delay faults. For example, Chalopin et al.~\cite{cdlp14} consider the problem of rendezvous in the presence of an adversary that can prevent an agent from moving for an arbitrary but finite time. In their case, the agent cannot be blocked forever as in our scenario.
Our model can also be contrasted with the model for network decontamination or, cops and robbers search games on graphs, where a team of good agents (called cops) tries to capture a fast fugitive (robber). The fugitive or hostile entity is exterminated as soon as one of the cops reaches it. Thus the behavior of the hostile entity, in this case, is opposite to that of the malicious agent in our model -- instead of blocking the honest agents, the hostile entity tries to get away from the good agents.     

In terms of practical motivation for this research, we can think of the malicious agent as representing a virus that may spread around the network. While in the classical decontamination problem the aim is to extinguish  the virus, in our setting the virus cannot be extinguished and has to be contained in one part of the network, thus dividing the network into {\em unstrusted} and {\em trusted} subnetworks. This scenario can be compared to the problem of \emph{botnets}, i.e. a subnet of compromised computers (bots), typically used for denial-of-service attacks on the internet.    
The untrusted subnetwork in our model can be seen as a botnet, and the \emph{botmaster} who controls the bots represents the malicious agent.
An honest agent that resides on a node protects the trusted network from the untrusted one by running some protection mechanism (e.g. a firewall, an intrusion detection mechanism, etc.). Thus the malicious agent cannot enter a node already occupied by an honest agent.
On the other hand the botnet is dynamic, and it may reduce its dimension (i.e., when the botmaster leaves the host) or it may increase it only on hosts not occupied, i.e., not protected by an agent.  Honest agents may expand towards the untrusted hosts which are not controlled by the botmaster anymore by running botnet detection mechanisms (see, e.g., \cite{YXS10}). 
We are then interested in solving the rendezvous problem in the  trusted subnetwork, and we want to study how this malicious behaviour affects the solvability of the {\em Rendezvous} problem.

\paragraph{\bf Related Work: }

The rendezvous problem has been studied for agents moving on graphs~\cite{ag02} or for robots moving on the plane~\cite{cp08}, using either deterministic or randomized algorithms.
In the fault-free scenario, the rendezvous problem can be solved relatively easily, even in asynchronous networks, when the network has an  asymmetry (e.g., a distinguished node), and can be explored by the agents, since the mobile agents can simply be instructed to meet at such a distinguished node. However, this is not the case for symmetric networks, or when the agents is incapable of visiting all nodes of the network, and the rendezvous problem in such settings is non-trivial and not always solvable even in simple topologies such as the ring network~\cite{kkmbook}. Symmetry-breaking for the rendezvous problem can be achieved by attaching unique identifiers to the agents (see, e.g.,~\cite{clp10,yy96}), or in the anonymous case using tokens as in e.g., \cite{cd10,dmsvw08}.
With respect to hostile environments, the {\em Rendezvous} problem has been studied when there is a black hole or other stationary faults in the network~\cite{cds07,dfps03,YIK12}. 
Another model for hostile nodes 
has been presented in~\cite{blmpp14,DBLP:conf/sirocco/KralovicM10}, where the authors have studied how a more severe (than a black hole) behaviour of a malicious host affects the solvability of the \emph{Periodic Data Retrieval} problem in asynchronous networks. A well studied problem in the context of a mobile adversary is the problem of graph searching where a team of mobile agents has to decontaminate the infected sites and prevent any reinfection of cleaned areas. This problem is equivalent to the one of capturing a fast and invisible fugitive moving in the network. 
For results on this and related problems see, e.g., \cite{bfffnst12,fhl07,fhl08,l09}.

Gathering of mobile agents has been also studied in the plane when there are faulty agents which may crash~\cite{ap06,Bouzid0T13} and in networks with delay faults~\cite{cdlp14} or in the presence of Byzantine agents~\cite{dpp14}, as mentioned before. 
However, to the best of our knowledge, the rendezvous problem has never been studied under the presence of hostile agents that may block other agents from having access to parts of the network.

\paragraph{\bf Our Results: }
In this paper we consider a network modelled as a connected undirected graph with multiple honest agents located at distinct nodes of the graph.
There is also a hostile entity which is mobile, called the malicious agent. It cannot harm the honest agents but can prevent them from visiting a node: an honest mobile agent cannot visit a node which is occupied by a malicious agent and vice versa. 
We are interested in solving the rendezvous of all honest agents in this hostile environment. Our objective is to study the feasibility of rendezvous with minimal assumptions. Thus, we consider the weakest possible model for the honest agents. The honest agents are finite state automata with local communication capability and having no prior knowledge of the network. In Section~\ref{impossibility} we show some configurations in which the problem is unsolvable  and we discuss properties that must be respected by any correct algorithm for the problem. For the rest of the paper, we consider ring and mesh networks -- two topologies that can be explored even by a finite automaton.
In Section~\ref{ring} we present a rendezvous algorithm for ring networks. For oriented rings, we have a universal algorithm that achieves rendezvous starting from any initial configuration, despite the existence of a malicious agent. We prove that the problem is unsolvable for any even number of agents in unoriented rings. Finally, we present an algorithm for rendezvous of  any odd number of agents in unoriented rings, thus solving the problem in all solvable instances. 
In Section~\ref{mesh-tori} we consider oriented mesh topologies and we prove that the problem can be solved when the agents initially form a connected configuration without holes if and only if they can detect which are the occupied nodes within a distance of two hops. We show that this latter capability is necessary to achieve rendezvous even for connected configurations without holes.
We conclude in Section~\ref{conclusion} with a discussion about future research directions for this new model.

\section{Preliminaries}
\label{impossibility}

\subsection{Our Model}
\label{model}

We represent the network by a graph $G=(V,E)$ composed by $|V|=n$ anonymous nodes or {\em hosts} and $|E|$ edges or connections between nodes. Each host is connected to other hosts by bidirectional asynchronous FIFO
links (i.e., an agent cannot overtake another agent moving in the same edge), and it is capable of serving agents by a mutual exclusive mechanism (i.e., an agent at a node $u$ must finish its computation and move or decide to stay, before any other agent at $u$ starts its computation or another agent visits $u$). The links incident to a host are distinctly labelled but this port labelling (unless explicitly mentioned), is not globally consistent. In the network there are some {\em mobile agents} which are independent computational processes with some constant internal memory. The agents  are initially scattered in the network (i.e., at most one agent at a node), 
and can move along its edges. 
An agent arriving at a node $u$, learns the label of the incoming port, the degree of $u$ and the labels of the outgoing ports. 
We assume there are $k$ {\em honest} anonymous identical agents $A_1, A_2, \ldots A_{k}$, and one {\em malicious} agent $M$ which may deviate from the proper operations. The initial locations of the honest and malicious agents are decided by an adversary.
We describe below the capabilities and behaviour of honest and malicious agents.

\medskip

\noindent
{\bf Honest agents:} An honest agent located at a node $u$ can see all other agents at $u$ (if any), and can also read their states. It can also read the degree of $u$ and the labels of the outgoing ports. The agents are anonymous, cannot exchange messages and cannot leave messages at nodes. They are identical finite state automata, hence they have some constant memory. 
The agents do not know $n$ and $k$. 
Two agents travelling on the same edge in different directions do not notice each other, and cannot meet on the edge. Their goal is to rendezvous at a node.
 
 \medskip
 
 \noindent
{\bf Malicious agent:} We consider a worst case scenario in which the malicious agent $M$ is a very
powerful entity compared to honest agents: It can move arbitrarily fast inside the network (since the model is asynchronous and the adversary is combined with the malicious agent) and it can permanently
`see' the positions of all the other agents. It has unlimited memory and knows the transition function of the honest agents. When it resides at a node $u$ it prevents any honest agent $A$ from visiting $u$ (i.e., it ``blocks" $A$): if an agent $A$ attempts to visit $u$ it receives a signal that $M$ is in $u$ (botnet detection) and in that case we say that $A$ {\em bumps} into $M$. The malicious agent can neither visit a node which is already occupied by some honest agent, nor cross some honest agent in a link. It also obeys the FIFO property of the links (i.e., it cannot overpass an honest agent which is moving on a link).

 \medskip
 
We call a node $u$ {\em occupied} (respectively,  {\em free} or {\em unoccupied}) when one or more (no) {\em honest} agents are in $u$.
We notice here that some of our impossibility results hold even for stronger models, e.g.,  when honest agents have unlimited memory, distinct identities, knowledge about the size of the network, visibility, etc. Our algorithm for the ring topology only requires the capabilities of the honest agents mentioned above while for the mesh topology we assume that the honest agents also have the ability to scan whether a node within a two-hops distance, is occupied or not.

\subsection{Basic Properties} 
In this section we show a special class of configurations for which the problem is unsolvable. Intuitively, those are configurations in which the malicious agent can keep separated at least two agents forever.

\begin{definition}\label{separable}
Let ${\cal C}$ be a configuration of a number of agents in a graph $G$ with a malicious agent. The configuration ${\cal C}$ is called {\bf separable} if there is a connected vertex cut-set $F$ composed of free nodes which, when removed, disconnects $G$ so that not all occupied nodes are in the same connected component.
\end{definition}

\begin{lemma}\label{impossibility-initial}
Rendezvous is impossible for any initial configuration in a graph $G$ which is separable, even if the agents have unlimited memory, distinct identities and can always see their current configuration.
\end{lemma}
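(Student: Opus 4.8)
The plan is to exploit the adversary's power, namely that the malicious agent is arbitrarily fast and has full knowledge of the configuration and of the honest agents' next moves. Since $\mathcal{C}$ is separable, fix a connected vertex cut-set $F$ of free nodes whose removal splits $G$ into components, with (at least) two occupied nodes $u$ and $v$ lying in different components $G_u$ and $G_v$. The strategy for $M$ is to station itself somewhere on $F$ and maintain the invariant that at every point in time, every node of $F$ is occupied by $M$ or is a node that no honest agent is currently trying to enter; more precisely, $M$ will guarantee that no honest agent ever crosses from one side of the cut to the other. If this invariant is maintained forever, then the agents starting in $G_u$ can never reach $G_v$ and vice versa, so at least two honest agents never meet, and rendezvous fails.

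The core of the argument is to describe $M$'s reactive strategy and show it is feasible under the model's timing rules. At any moment, consider the (finitely many) honest agents that are adjacent to $F$ — i.e. located at a node with an edge into $F$ — and about to step onto a node of $F$. Because the model is asynchronous and $M$ knows each agent's next action and can move arbitrarily fast, $M$ can always position itself on the specific node of $F$ that the ``next'' agent to act is about to enter, thereby blocking that agent (it ``bumps'' into $M$). The key structural point is that $F$ is \emph{connected}, so $M$ can travel between any two nodes of $F$ using only nodes of $F$, which are all free in the current configuration; this keeps $M$ from ever needing to pass through an occupied node or cross an honest agent on a link. One must also check the FIFO/mutual-exclusion constraints: since the adversary controls scheduling, it can serialize the honest agents' attempts to enter $F$ one at a time and have $M$ interpose before each such attempt completes. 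Thus the blocking is always realizable.

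The remaining subtlety is that $F$ could have many nodes while $M$ is a single agent, so I should argue $M$ never needs to be in two places at once. Here the FIFO links and the asynchronous adversarial schedule are what save us: the adversary simply refuses to let any honest agent complete a move onto $F$ until $M$ has relocated to intercept it, processing the honest agents sequentially. Because honest agents cannot cross $F$ without completing a move onto some node of $F$ first, and every such move is intercepted, the two sides stay permanently disconnected for the honest agents. Note this argument does not use weakness of the honest agents at all — it works verbatim even if they have unlimited memory, unique identifiers, and full visibility of the configuration, which is exactly the strengthened impossibility claimed in the lemma.

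The main obstacle I anticipate is making the ``$M$ relocates before any honest agent finishes crossing'' step fully rigorous against the asynchronous semantics: one must define carefully what it means for the adversary to schedule moves and respond, and confirm that an honest agent's traversal of an edge into $F$ can always be ``paused'' (or simply not yet started) long enough for $M$ to arrive, without violating the asynchrony model (where delays are finite but unbounded — here the adversary is allowed to make $M$ fast and the honest agent's move not-yet-begun, so this is consistent). Once that scheduling discipline is pinned down, the invariant is preserved by a straightforward induction on the sequence of honest-agent moves, and the impossibility follows.
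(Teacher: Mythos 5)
Your proposal is correct and follows essentially the same argument as the paper: place the malicious agent on the connected, all-free cut-set $F$, use its connectivity to relocate within $F$, and use adversarial asynchronous scheduling to delay honest agents so that every attempted entry into $F$ is intercepted, keeping the two components separated forever. Your treatment of the serialization subtlety (that $M$ is a single agent and must intercept attempts one at a time) is a more explicit rendering of what the paper compresses into ``due to asynchronicity an adversary can introduce delays,'' but it is the same proof.
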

\begin{proof}
Let ${\cal C}$ be an initial configuration which is separable, and let $F$ be a connected vertex cut-set, whose removal disconnects $G$ so that not all occupied nodes are in the same connected component. Let $u, v$ be two occupied nodes which are in different connected components of $G$ and let $A, B$ be the honest agents located at $u,v$ respectively. Due to asynchronicity an adversary can introduce delays to $A$'s and $B$'s movements while at the same time the malicious agent, which has been initially placed at a node in $F$, can move everywhere in $F$ (since $F$ has only free nodes and it is connected) preventing agents $A, B$, from visiting any node in $F$. Since all paths between $u$ and $v$ include at least one node of $F$, agents $A, B$ can never meet, no matter how powerful they are.\qed
\end{proof}

Hence for every initial separable configuration the problem is unsolvable. A natural question is whether there are non-separable initial configurations for which the problem is unsolvable. The answer is yes and one can easily find such configurations (see Examples \ref{connected-sep} and \ref{disconnected-sep} below).

We now state sufficient conditions under which the problem is unsolvable for a separable (initial or not) configuration of agents.

\begin{definition}\label{separating}
Let ${\cal C}_t$ be a configuration at time $t \geq 0$ (i.e., initial or not) of a number of agents in a graph $G$ with a malicious agent. The configuration ${\cal C}_t$ is called {\bf separating} if ${\cal C}_t$ is separable and either ${\cal C}_t$ is an initial configuration or the following conditions hold:
\begin{itemize}
\item there is a node $x_m \in F_t$ ($F_t$ is any vertex cut-set of ${\cal C}_t$ as defined in Definition~\ref{separable}) and a path of nodes $(x_0, x_1, \ldots , x_m)$ so that $x_0$ is free at time $0$ and, 
\item the sequence of nodes $(x_0, x_1, \ldots , x_m)$ can be partitioned in $k \leq t+1$ contiguous subsequences $(x^0_0, \ldots , x^0_i), (x^1_{i+1}, \ldots , x^1_j), \ldots ,(x^k_{l+1}, \ldots , x^k_m)$, where $0 \leq i < j < l < m$ and, 
\item the nodes $(x^s_w, \ldots , x^s_r)$ belonging to the same subsequence $s$ are free at time $s$, where $0 \leq s \leq k$ and nodes $(x^k_w, \ldots , x^k_r)$ are free at time $t$.
\end{itemize}
\end{definition}
\begin{lemma}\label{impossibility-anytime}
Rendezvous is impossible for any {\em separating} configuration in a graph $G$, even if the agents have unlimited memory, distinct identities and can always see their current configuration.
\end{lemma}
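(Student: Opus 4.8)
The plan is to reduce the general case to Lemma~\ref{impossibility-initial}: I will exhibit an adversarial execution in which the malicious agent $M$ already occupies a node of the cut-set $F_t$ by the time the honest agents reach the separable configuration ${\cal C}_t$, and from that point invoke the argument of Lemma~\ref{impossibility-initial} verbatim. If ${\cal C}_t$ is itself an initial configuration there is nothing new to prove: it is separable, so placing $M$ on any node of its free cut-set and applying Lemma~\ref{impossibility-initial} already yields impossibility. So assume ${\cal C}_t$ is reached at some time $t>0$ along an execution witnessing Definition~\ref{separating}, and fix the path $(x_0, x_1, \ldots, x_m)$ with $x_m\in F_t$ together with its partition into contiguous subsequences $S_0, S_1, \ldots, S_k$ (with $k\le t+1$), where the nodes of $S_s$ are free at time $s$ and $S_k$ is moreover free at time $t$.

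First I would route $M$ to $F_t$. The adversary starts $M$ on $x_0$, which is legal since $x_0$ is free at time~$0$. Then, phase by phase for $s=0,1,\ldots,k$: at the moment the honest agents' configuration equals ${\cal C}_s$, every node of $S_s$ is free, so, using asynchrony, the adversary suspends all honest agents (each left parked before committing its next step, or in transit on a link) while the arbitrarily fast $M$---which knows the honest agents' pending moves---walks from its current node across the whole of $S_s$, halting on the last node of $S_s$; that node is adjacent to the first node of $S_{s+1}$ because the subsequences are contiguous along the path. Throughout, $M$ rests only on currently free nodes, hence never enters an occupied node, never crosses an honest agent on a link, and can obey the FIFO discipline since the adversary also dictates when honest agents traverse links. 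After phase $k$, $M$ is on the last node of $S_k$, which stays free until time $t$; the adversary now releases the honest agents (kept away only from whatever single node $M$ occupies) and lets them proceed until configuration ${\cal C}_t$ is attained, at which instant $M$ steps onto $x_m\in F_t$.

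From here the situation is exactly the one analysed in Lemma~\ref{impossibility-initial}, with the free connected cut-set $F=F_t$ and $M$ already inside it: two occupied nodes $u,v$ lie in distinct components of $G\setminus F_t$, the adversary keeps $M$ roaming over $F_t$ so that the honest agents at $u$ and $v$ bump into $M$ whenever they attempt to enter $F_t$, and since every $u$--$v$ path meets $F_t$ these two agents never gather; hence rendezvous is impossible. The step needing the most care---and the reason Definition~\ref{separating} is phrased through a time-indexed partition of the path---is verifying that $M$'s presence on the path nodes does not perturb the honest agents enough to destroy the prefix ${\cal C}_0,\ldots,{\cal C}_t$ or the freeness of $F_t$: $M$ sits on a segment $S_s$ only during the single window (configuration ${\cal C}_s$) in which that segment is unoccupied and the scheduled honest moves stay clear of it, and it vacates $S_s$ before any honest step that would re-enter it. Checking that this windowing, together with the adversary's freedom to interleave and delay honest moves, still lets $M$ reach $F_t$ without ever forcing an honest agent to deviate is the technical heart of the argument; everything after $M$ enters $F_t$ is a direct reprise of Lemma~\ref{impossibility-initial}.
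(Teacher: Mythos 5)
Your proposal is correct and follows essentially the same route as the paper: handle the initial-configuration case via Lemma~\ref{impossibility-initial}, then have the adversary march $M$ from $x_0$ along the time-indexed subsequences of the path (each traversed during the window in which it is free) until $M$ sits on $x_m\in F_t$, and finally replay the cut-set argument of Lemma~\ref{impossibility-initial}. Your write-up is somewhat more explicit than the paper's about the adversarial scheduling that keeps the honest agents' execution unperturbed, but the underlying argument is the same.
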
 
\pf
Let ${\cal C}_t$ be a separating configuration of agents in a graph $G$ with a malicious agent. If ${\cal C}_t$ is an initial configuration then by Lemma~\ref{impossibility-initial}, rendezvous is impossible. Assume that ${\cal C}_t$ is a configuration that has resulted after some moves of the honest agents. Since ${\cal C}_t$ is a separating configuration, the malicious agent can be initially placed at $x_0$  and at time $1$ can move from the last node of subsequence $0$ to the last node of subsequence $1$ via the path of free nodes at time $1$. Then at time $2$ it can move from the last node of subsequence $1$ to the last node of subsequence $2$, and so on up to time $s$ when it can move to node $x_m$ where remains until time $t$. 
Since at time $t$, the configuration ${\cal C}_t$ is separable and $x_m \in F_t$, for the same arguments  presented in the proof of Lemma \ref{impossibility-initial}, there are at least two agents which will never meet.\qed

Intuitively, Lemma~\ref{impossibility-anytime} states that if ${\cal C}_t$ is a separable configuration, and in ${\cal C}_t$ there is a free node $x$ so that either: i) $x$ has been always free or, ii) there are paths of nodes which eventually become free and they form a connection between a free node at ${\cal C}_0$ and $x$, then there are at least two agents in ${\cal C}_t$ which will never meet.
Hence, any correct algorithm for the solution of the problem should avoid creating a separating configuration.

\begin{example}\label{connected-sep}
Suppose that the set of occupied nodes is connected and the configuration is as shown in Fig.~\ref{cut} left (the occupied nodes are denoted by circled nodes). 
\begin{figure}
\vbox to 6mm{\hspace{0cm} \psfig{figure=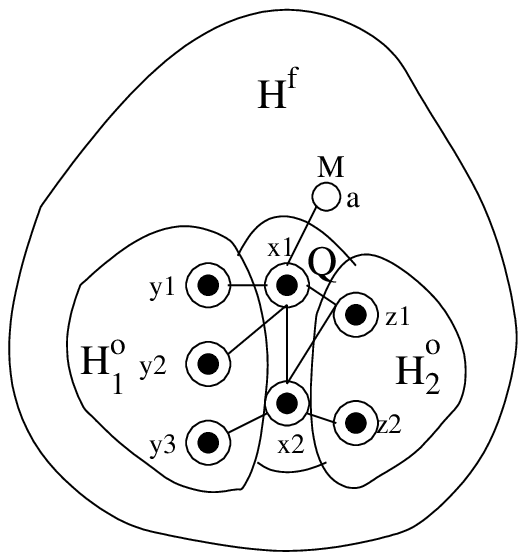,width=30mm}
\hfill \hbox to 50mm{\hspace{0cm}
\psfig{figure=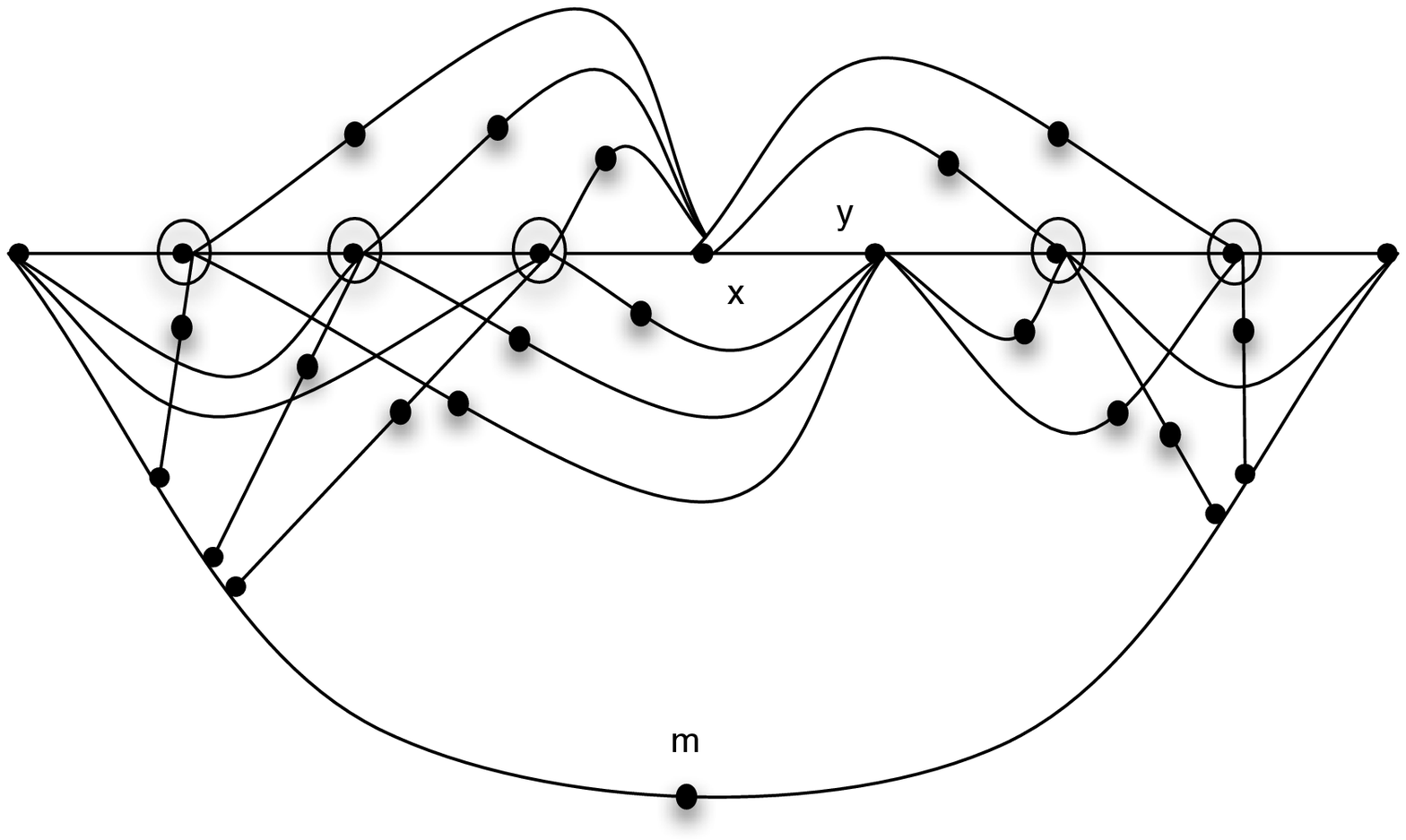,width=52mm}}\vss}
\vspace*{3.3cm}
\caption{\label{cut} Left: $H^f$ and  $H^o$ are the subgraphs induced by all free and occupied nodes, respectively.  Let $Q$ be a vertex cut-set of $H^o$ consisting of the nodes $x_1,x_2$. The removal of nodes in $Q$ splits $H^o$ into $H^o_1$ and $H^o_2$. The malicious agent $M$ resides at $a$ and there is an edge $\{x_1,a\}\in E(G)$. Right: Initially, nodes $x,y$ are unreachable by the malicious agent $M$ located at node $m$. If any agent moves, then $M$ can reach both nodes $x,y$ and all new free nodes and therefore prevent the agents on the left to meet the agents on the right.}
\end{figure}
Any correct algorithm for the rendezvous problem should avoid moving the agents from both $x_1,x_2$ nodes as long as there are agents in $H^o_1$ and $H^o_2$ since in that case the resulting configuration will be separating. In other words, in such a case the malicious agent will prevent forever the meeting of any agent in $H^o_1$ with an agent in  $H^o_2$. However, due to the asynchronicity of the agents' movements and the limited knowledge the agents have about the configuration, an algorithm cannot guarantee such a behaviour (even if the agents have unlimited memory and distinct identities).\qed
\end{example}

\begin{example}\label{disconnected-sep}
Consider now the initial configuration ${\cal C}$ in graph $G$ of Fig.~\ref{cut} right. There are sets $F_i$ of free nodes whose removal would disconnect $G$ so that not all occupied nodes are in the same connected component. However every such set $F_i$ is disconnected. Hence the configuration is not separable.
The agents are in the circled nodes of Fig.~\ref{cut} right, and the malicious agent is in node $m$. If any agent moves, then the new subgraph of occupied nodes remains disconnected, but the new subgraph of free nodes (whose removal disconnects $G$ so that not all occupied nodes are in the same connected component), becomes connected and thus the resulting configuration is separating. In other words, if any agent moves then the malicious agent will prevent forever the meeting of any two agents occupying nodes left and right of node $x$.
Hence no matter how much power or capabilities the agents have (even if they can see the configuration at all times), rendezvous is impossible.\qed
\end{example}

\section{Rendezvous in a Ring Network}
\label{ring}

In this section we will study the rendezvous problem in bidirectional rings with a malicious agent $M$.
Notice that in a ring topology there are no separable (and hence no separating either) configurations, since there cannot exist a connected cut-set composed of free nodes whose removal would disconnect the ring. However, 
since the ring is highly symmetric, rendezvous is impossible even if the agents have unlimited memory and have full knowledge of the configuration, since an adversary can keep synchronized the agents so that they always take the same actions at the same time and therefore they maintain their initial distances (the malicious agent can keep on moving synchronized with the honest agents). 
Thus, in order to solve the problem we need to add some constraints to the model. A natural step is to assume that there is a special node labeled \oStar in the ring which can be recognized by the agents. Note that the malicious agent is so powerful that it could place itself on \oStar and never move from there. Our strategies also work under this scenario. 
We now present algorithms for solving the problem in oriented and unoriented rings.

\subsection{Oriented Ring}

In an oriented ring, the two incident edges at each node are labelled as clockwise or counter-clockwise in a consistent manner; so, all agents agree on the ring orientation. 

The idea of the algorithm is the following. Each agent moves in the clockwise direction until it meets \oStar or bumps into $M$. For the first three times that the agent bumps into $M$ without meeting \oStar, it reverses its direction and continues moving in the opposite direction. Due to the FIFO property and the fact that the agent cannot pass over $M$, we can show that if an agent bump into $M$ after reversing directions at least three times, then the other agents should have bumped into $M$ at least twice, without meeting the special node \oStar (see Lemma~\ref{lemma:2bumps}). After an agent has already bumped into $M$ three times, the next time it bumps into $M$ or meets \oStar it stops. On the other hand, if the agent meets \oStar before it bumps into $M$ twice, then the agent stops at \oStar, and all the other agents will arrive at \oStar after bumping into $M$ at most once. The algorithm called RV-OR  is presented below. 
 
\begin{algorithm}

Let $i := 0$\;
DIR := Clockwise\;
\While{not Stopped}{

  Move DIR until you bump into $M$ or meet \oStar or a stopped agent\;  i=i+1\;
  \If {you met a Stopped agent} {Become Stopped and Exit loop\; }
  \If {$i=1$ or $i=2$}
  {
    \If {Current node is \oStar } {Become Stopped and Exit loop\;}
    \ElseIf{Bumped into $M$}  { Reverse direction (DIR := inverse(DIR))\; }
  }
  \If {$i=3$}
  { 
     \If {Current node is \oStar or Bumped into $M$} { Reverse direction (DIR := inverse(DIR))\; }
  }
  \If {$i=4$}
  { 
     \If {Current node is \oStar or Bumped into $M$} { Become Stopped and Exit Loop \; }
  }
}  

 \caption{RV-OR : Rendezvous of $k\geq2$ agents in oriented rings}
  \label{algo:ring_async}
\end{algorithm}

\begin{lemma}\label{lemma:2bumps}
During the execution of the algorithm, if an agent bumps into $M$ in the fourth iteration of the {\em while} loop, then any other agent must have bumped into $M$ at least two times. 

\end{lemma}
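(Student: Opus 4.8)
The plan is to track the positions and directions of the agents on the ring, using the FIFO property of the edges and the fact that neither honest agents nor $M$ can overtake each other on a link. The key structural fact is that $M$ together with any honest agent divides the remaining ring into arcs, and the order of agents around the ring is an invariant except when two agents meet (in which case they stop together, which is fine). First I would set up notation: fix the clockwise orientation, and for a given agent $A$ that bumps into $M$ in its fourth iteration of the while loop, consider the sequence of events (bumps into $M$, and the direction reversals they trigger). Agent $A$ starts clockwise, bumps into $M$ (iteration 1, reverses to counter-clockwise), bumps again (iteration 2, reverses to clockwise), bumps again (iteration 3, reverses to counter-clockwise), and bumps a fourth time — and crucially, none of these were encounters with \oStar or a stopped agent, since otherwise $A$ would have stopped.

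Next I would argue about what $M$ must be doing between these bumps. When $A$ bumps into $M$ and reverses, $A$ and $M$ are on adjacent nodes; for $A$ to bump into $M$ again from the opposite side, $M$ must have moved "around" to meet $A$ on the other side — but $M$ cannot pass through $A$ (FIFO, no overtaking), so $M$ must travel the long way around the ring, i.e. $M$ sweeps past every other node on the ring, including the nodes occupied by the other agents. The central observation is: each time $M$ sweeps around the ring to get from one side of $A$ to the other, it must either pass each other agent $B$ (forcing $B$ to have been bumped, or at least to have been on the side $M$ came from and be displaced), or, more carefully, $B$ itself must be moving and the two of them (B and M) interact. I would make this precise by showing that between consecutive bumps of $A$, the agent $M$ necessarily causes at least one bump of every other agent $B$: since $M$ goes all the way around and $B$ cannot swap places with $M$ on an edge, at some point $B$'s attempted move is blocked by $M$, i.e. $B$ bumps. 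Doing this twice (between bump 1 and bump 2 of $A$, wait—better: between bumps 2 and 3, and between bumps 3 and 4, or some two of the three inter-bump intervals that are guaranteed to be "full loops") yields that every other agent bumped at least twice.

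I expect the main obstacle to be handling the asynchrony and the direction changes of the \emph{other} agents carefully: a non-$A$ agent $B$ is also executing the algorithm and reversing directions, so I cannot simply say "$M$ passes $B$." The right formulation is in terms of the cyclic order of agents on the ring and the arcs between them: I would show that $M$ cannot "teleport" from one arc adjacent to $A$ to the other without the arc structure forcing a bump on each intervening agent. The subtle case is when $B$ is itself moving toward $M$ or away from $M$ at the relevant time; here the FIFO/no-overtaking property on the shared link is exactly what guarantees that $B$'s attempt to traverse that link while $M$ is on it (or entering it from the other end) registers as a bump for $B$. I would also need to rule out the degenerate possibility that $M$ only moves within a single arc and $A$'s repeated bumps all happen on the same side — but that is impossible because after bump 1, $A$ moves counter-clockwise (away from where it bumped), so bump 2 must be on the counter-clockwise side, and so on, alternating; hence $M$ genuinely alternates sides of $A$, which it can only do by going around. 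Once this alternation-forces-full-loops claim is nailed down, counting the bumps of the other agents is routine.
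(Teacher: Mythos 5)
There is a fundamental gap: the central mechanism of your plan --- that $M$ must ``travel the long way around the ring'' between consecutive bumps of $A$, thereby forcing every other agent to bump --- does not hold, for two independent reasons. First, on a ring $M$ need not move at all for $A$ to bump into it from the opposite direction: after reversing, $A$ itself circumnavigates the ring and approaches $M$'s (possibly unchanged) node from the other side. ``Clockwise of $A$'' and ``counter-clockwise of $A$'' are the same set of nodes on a cycle, so your claim that ``$M$ genuinely alternates sides of $A$, which it can only do by going around'' implicitly treats the ring as a path. Second, even if $M$ wanted to sweep around, the model forbids it: $M$ cannot visit an occupied node nor cross an honest agent on a link, so the cyclic position of $M$ relative to \emph{every} honest agent is invariant --- $M$ can never ``pass each other agent $B$,'' and your phrase ``$M$ sweeps past every other node, including the nodes occupied by the other agents'' directly contradicts the model. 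Moreover, even $M$ lingering next to $B$ forces no bump: a bump is recorded only when $B$ actively attempts to enter $M$'s node, and the adversary may delay $B$ arbitrarily.

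The correct argument is of a different nature and hinges on the special node \oStar and the stopping rule, which your proposal never invokes. In iterations one and two an agent either bumps into $M$ or halts at \oStar (or at an agent already halted there); so the lemma is really the claim that no other agent halts at \oStar early. The paper establishes this by observing that $A$, between its own bumps, traverses essentially the entire $M$-free arc of the ring, so if some agent had stopped at \oStar in its first two iterations, $A$ would meet that stopped agent (or \oStar itself) and halt before it could bump into $M$ a fourth time --- contradicting the hypothesis. Your plan cannot be repaired by ``nailing down'' the inter-bump intervals, because the full loops of $M$ you intend to count do not occur.
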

\pf 
Consider the first agent $A$ that bumps into $M$ two times at nodes $u,v$ without meeting \oStar and consider the segment of the ring between these two nodes $u$ and $v$, where $A$ was moving until it bumped into $M$ for the second time. The special node  \oStar  cannot be in this segment, otherwise the agent $A$ would have stopped there. If another agent $B$ meets \oStar without having bumped into $M$ (i.e., going clockwise), then it will stop there, and when agent $A$ or any other agent changes direction at the second bump, it will reach the Stopped agent $B$, before bumping into $M$ for a third time. So no agent reaches the fourth iteration in this case. Now suppose that an agent $C$ meets \oStar after having bumped into $M$ once (i.e. going counter-clockwise), then agent $C$ stops at \oStar. At this time, agent $A$ must be in its third iteration (having bumped twice into $M$). Agent $A$ will bump into $M$ for the third time and reverse direction. Now agent $A$ will reach \oStar and the Stopped agent $C$ before reaching $M$ in its fourth iteration. So, if the agent $A$ bumps into $M$ in its fourth iteration, this implies that no agent meets \oStar in the first two iterations i.e. all other agents have bumped into $M$ twice.
\qed

\begin{lemma}\label{lemma:oriented}
Algorithm RV-OR solves rendezvous of $k\geq 2$ agents in spite of one malicious agent, in any oriented ring containing one special node \oStar.
\end{lemma}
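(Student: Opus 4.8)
The plan is to show two things: (a) every honest agent eventually becomes Stopped, and (b) all agents that become Stopped do so at the same node. Since an agent only stops either at \oStar, or upon bumping into $M$ for the fourth time (in iteration $i=4$), or upon meeting an already-Stopped agent, the argument will revolve around controlling where these stopping events can occur, and Lemma~\ref{lemma:2bumps} is the pivotal tool.

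First I would establish termination. Each agent, in each iteration, moves in a fixed direction DIR until it either bumps into $M$, meets \oStar, or meets a Stopped agent. I would argue that no agent can move forever without one of these three events occurring: since the ring is finite, an agent moving in a fixed direction must within $n$ steps either complete a full loop (but then it passes \oStar, contradiction) or hit $M$ (which, being unable to be passed and obeying FIFO, sits somewhere on the ring) or hit a Stopped agent. Asynchrony only delays events, it does not prevent them, because $M$ cannot vacate the ring and \oStar is a fixed node. Hence each iteration terminates, and since the counter $i$ is incremented each iteration and the agent stops no later than $i=4$ (the only branch in iteration $4$ stops unconditionally when the agent meets \oStar, bumps into $M$, or meets a Stopped agent — and one of these is forced by the previous sentence), every agent stops within at most four iterations. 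I should double-check the edge case where an agent first meets another agent that is Stopped: then it joins immediately, which is consistent.

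Next I would prove that all stopped agents are co-located at \oStar, by a case analysis on how the ``first'' stopping event arises. If some agent $A$ reaches \oStar in iteration $1$ or $2$ (i.e., before bumping into $M$ twice), it stops at \oStar; I then claim every other agent subsequently reaches \oStar: an agent going clockwise either reaches \oStar directly, or bumps into $M$ and reverses — but by the argument in Lemma~\ref{lemma:2bumps}, after reversing it will encounter \oStar (or the Stopped agent $A$ there) before accumulating four bumps, because \oStar lies in the arc it is now traversing. The symmetric sub-case where an agent reaches \oStar only after one bump is handled the same way using Lemma~\ref{lemma:2bumps}. The remaining case is that \emph{no} agent ever reaches \oStar within its first two iterations; then by Lemma~\ref{lemma:2bumps} the hypothesis of the reversal rules at $i=3$ and the stop rule at $i=4$ is actually reached by agents only after everyone has bumped twice — and now I would argue that $M$ occupies a single arc of the ring, the agents are all ``trapped'' bouncing against $M$ on the same side, and the third reversal plus the fourth-bump stop forces them to converge. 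Concretely: once all agents have bumped twice without seeing \oStar, \oStar sits strictly inside the ``$M$-free'' arc, and on the third iteration every agent moves back through that arc; the first to pass \oStar would stop there (handled above), so if that does not happen, all agents run into $M$ a third time from the same side, reverse, and on the fourth pass they all funnel toward the same endpoint of the $M$-free arc, meeting \oStar or each other there. Pinning down that they meet the \emph{same} Stopped agent / the \emph{same} node \oStar is where the bookkeeping with the FIFO property is essential: since agents on one side of $M$ cannot cross $M$ and cannot pass each other on an edge, their relative cyclic order is preserved, so the first to stop acts as an anchor the others collide with.

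The main obstacle I anticipate is the last case — showing convergence when no agent ever sees \oStar early. Here one must rule out a ``ping-pong forever'' scenario and show the counting (reverse at bumps $1,2,3$; stop at bump $4$) genuinely forces all agents onto \oStar or onto a single Stopped agent, using that $M$ occupies a contiguous blocking arc and the FIFO/no-overtaking constraints keep the agents' order fixed. A careful statement of ``$M$ blocks a contiguous arc at the moment all agents have bumped twice'' and a clean induction on the order in which agents stop should close it; asynchrony is handled throughout by noting it can only reorder the timing of forced events, never cancel them. \qed
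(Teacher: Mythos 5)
Your overall architecture (termination within four iterations, then co-location via a case split on whether some agent reaches \oStar in its first two iterations, with Lemma~\ref{lemma:2bumps} as the pivot) matches the paper's proof, and the easy cases are handled correctly. But the hard case --- no agent stops at \oStar early and some agent stops in its fourth iteration --- is exactly where you stop short, and two things go wrong there. First, you assert that in the third iteration ``the first to pass \oStar would stop there (handled above)''; it would not: at $i=3$ the algorithm \emph{reverses} at \oStar rather than stopping, so a third iteration may end at \oStar for some agents and at $M$ for others, and these are different positions. Second, and more seriously, your ``funnel to the same endpoint'' claim is precisely the statement that needs proof: an agent in its fourth iteration stops at \emph{whichever} of \oStar, $M$, or a Stopped agent it meets first, so you must rule out the scenario in which one agent stops at \oStar while another stops adjacent to $M$ --- two distinct nodes --- in their respective fourth iterations. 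Saying the first stopper ``acts as an anchor the others collide with'' presupposes that the anchor lies between each remaining agent and its next stopping trigger, which is the whole content of the remaining argument; FIFO/order preservation alone does not give it.

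The paper closes this gap by letting $A$ be the first agent to stop in its fourth iteration and splitting on where it stops. If $A$ stops at \oStar, one derives a contradiction with the existence of an agent $B$ that stops at $M$ in its fourth iteration: $B$ cannot have ended its third iteration at $M$ (that would mean $M$ disconnects the region containing the honest agents), so $B$ ended it at \oStar before $A$ arrived there; then either $M$ lies in the counter-clockwise arc from $A$ to $B$ at that moment, in which case $B$ cannot bump into $M$ in its fourth iteration, or $M$ does not, in which case $A$ would already have met \oStar and stopped in its second iteration. If instead $A$ stops at $M$, then any other agent $B$, upon its third reversal at $M$ or at \oStar, has neither $M$ nor \oStar strictly between itself and $A$ in its new direction of travel, so it necessarily reaches the Stopped agent $A$ before either trigger. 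This positional analysis is the missing step; without it your proof is incomplete.
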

\pf

\noindent {\em Case 1}: No agent reaches the third iteration of the while loop. Thus no agent has bumped into $M$ more than once. In this case all agents will eventually reach \oStar and stop there. So, rendezvous is achieved.

\noindent {\em Case 2}: No agent reaches the fourth iteration. Consider the first agent $A$ to reach the third iteration of the while loop. Then this agent has bumped into $M$ twice (and it has changed direction twice). If this agent meets a Stopped agent then the Stopped agent must be in the special node \oStar. In this case any subsequent agent would also reach \oStar by the same argument and thus we have rendezvous. Thus, either rendezvous is achieved or the agent $A$ in the third iteration does not reach a Stopped agent and enters the fourth iteration of the loop. 

\noindent {\em Case 3:} If all agents that reach the fourth iteration, stop at \oStar or at another stopped agent, then we have rendezvous at  \oStar (since any agent that stops before the fourth iteration should have stopped at \oStar). So the only remaining case is when an agent  stops on bumping into $M$ in the fourth iteration. 

In this case, due to Lemma~\ref{lemma:2bumps} all other agents would have bumped into $M$ at least $2$ times, i.e they would all reach the third iteration of the loop. Since no agent in the third iteration can stop at $M$ or \oStar they either eventually reach their fourth iteration or they meet an agent which has stopped in its fourth iteration. Let $A$ be the first agent that stops in its fourth iteration.

\noindent {\em Subcase $3.1$:}We first consider the case that $A$ stops at \oStar. Let $B$ be an agent that bumps into $M$ and stops at its fourth iteration (since we are in case $3$).
Note that agents $A$ and $B$ have changed direction exactly $3$ times before reaching the 4th iteration of the loop, so the agents are moving in the same direction in the fourth iteration (even though they may not reach the fourth iteration at the same time). Let without loss of generality, that direction be {\em counter-clockwise}. Since agent $B$ stops at $M$ in its fourth iteration it could not have bumped into $M$ at the end of its third iteration because in that case it would mean that the malicious agent has disconnected the area containing the honest agents which is impossible. Hence, agent $B$ must have met \oStar at the end of its third iteration, before agent $A$ stops there (otherwise $B$ would stop at \oStar). Consider the part of the ring that starts with agent $A$ and ends with agent $B$ in counter-clockwise direction, when $B$ ends its third iteration at \oStar. If the malicious agent lies in that part then $B$ cannot bump into $M$ in its fourth iteration. If $M$ does not lie in that part then $A$ should have met \oStar at the end of its second iteration and hence $A$ should have stopped then.

\noindent {\em Subcase $3.2$:}We now consider the case that $A$ stops at $M$.
By the time $A$ stops at $M$, agent $B$ is at least in the third iteration. When this agent meets either $M$ or $\oStar$ in the third iteration, it changes its direction and now there is neither $M$ nor \oStar between this agent and agent $A$. So agent $B$ will meet agent $A$ before $B$ reaches $M$ or \oStar , thus, we will have rendezvous. 
\qed

\subsection{Unoriented Rings}
In unoriented rings, each agent has its own notion of clockwise and the agents may not agree on the clockwise direction. In this case rendezvous is not always feasible.

\begin{lemma}\label{lemma:undirected-even}
For any even number $k\geq2$, the rendezvous problem for $k$ honest agents and one malicious agent cannot be solved in any bidirectional unoriented anonymous ring with a special node \oStar, even if the agents know $k$.
\end{lemma}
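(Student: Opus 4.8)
The plan is to exhibit an adversarial strategy that exploits a symmetry of the configuration. Consider an even number $k$ of honest agents. Place them in the ring together with the malicious agent $M$ in a configuration that is symmetric under a reflection (an axis of symmetry) of the ring such that the axis passes through the special node \oStar and through $M$, with the $k$ honest agents arranged in $k/2$ mirror-image pairs on the two sides of the axis, no agent lying on the axis. Since each honest agent has its own private sense of clockwise, the adversary also fixes the port labelling so that the two agents of each mirror pair have \emph{opposite} local orientations, i.e. the whole labelled configuration is invariant under the reflection. This is where the freedom of the unoriented model is crucial: in an oriented ring no such reflection-invariant labelling is possible, but here it is.

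The heart of the argument is then an invariant: the adversary (together with $M$) maintains, at every point in time, that the configuration is a mirror image of itself across the fixed axis, and in particular that the two agents in each pair are in identical states, at mirror-image positions, about to take mirror-image actions. Since the honest agents are deterministic finite automata whose behaviour depends only on what they locally observe (incoming port, degree, outgoing port labels, states of co-located agents, and whether they bump into $M$), two agents in mirror-symmetric local situations will decide on mirror-symmetric moves. The adversary schedules the two agents of a pair in lock-step (it controls the asynchronous timing), and whenever one of them would bump into $M$, symmetry guarantees the other does too at the mirror node; $M$ itself moves symmetrically, which it can do because the two ``attempted'' positions of a pair are mirror images and $M$ sitting on the axis (or moving on the axis, which maps to itself) blocks both or neither. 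Because \oStar lies on the axis, an agent reaching \oStar has a mirror partner also at \oStar only if \oStar is itself the meeting point — but \oStar is a single node, and to gather there \emph{all} agents would have to arrive, whereas the two agents of any pair are always at distinct mirror-image nodes unless that node is on the axis; the only axis nodes are \oStar and the location of $M$, and $M$'s node cannot be occupied by an honest agent. So at no time can two agents of a pair occupy the same node, hence rendezvous never occurs. Knowledge of $k$ does not help, since the impossibility is about the dynamics, not about termination detection.

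The key steps, in order, are: (1) describe the reflection-symmetric initial configuration and port labelling, checking it is a legal instance (agents at distinct nodes, $M$ at a free node, \oStar recognizable); (2) state the invariant precisely — after any finite sequence of (adversarially scheduled) atomic agent steps, the labelled configuration including agent states is reflection-invariant and each pair consists of two distinct mirror-image nodes; (3) prove the invariant is preserved by one atomic step, using determinism of the automata and the adversary's ability to (a) schedule a pair in lock-step, (b) move $M$ symmetrically, (c) resolve bumps symmetrically because the FIFO/blocking behaviour is itself symmetric; (4) conclude that since each pair always occupies two distinct nodes, and gathering requires all agents at one node, rendezvous is impossible.

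I expect the main obstacle to be step (3): making rigorous that the adversary can indeed keep a mirror pair perfectly synchronized in the asynchronous model, and that $M$ never gets ``stuck'' being unable to block symmetrically — e.g. ruling out a situation where one agent of a pair is mid-edge and its partner has just requested to enter the node $M$ occupies. One must argue that since the adversary controls all timing, it can always advance the two agents of a pair through matching phases (announce move / traverse edge / arrive) in an interleaved but state-matched fashion, and that whenever $M$ needs to be ``in front of'' one agent it is, by symmetry, simultaneously in front of the partner; the one case to handle carefully is when a pair's two target nodes are the same axis node $\neq \oStar$, which would force them to meet — but the adversary avoids this by never letting $M$ leave the axis region in a way that would funnel a pair onto the axis, or more simply by choosing the geometry so the only axis nodes ever reachable as simultaneous targets are \oStar and $M$'s node, and handling \oStar by noting that an agent stopping at \oStar merely waits there for its partner, which never arrives at a \emph{different} node than \oStar only when all other agents are also there — a case the symmetry forbids while $k\ge 2$ pairs remain apart.
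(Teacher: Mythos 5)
Your overall strategy --- a reflection symmetry preserved by an adversarial lock-step schedule, with mirror partners given opposite local orientations --- is exactly the idea behind the paper's proof. But your concrete construction has a genuine gap: you place \oStar on the axis of symmetry, outside the region guarded by $M$. A node fixed by the reflection is precisely a node where the two agents of a mirror pair \emph{can} legally meet without violating the symmetry invariant, so your invariant does not forbid rendezvous at \oStar. Worse, the instance you build is actually solvable: with $M$ confined (by your own symmetry requirement) to the single fixed node antipodal to \oStar, the algorithm ``move in your local clockwise direction, reverse on bumping into $M$, stop at \oStar'' gathers everyone at \oStar. Your closing sentences show you sensed the problem, but the proposed patches (``never let $M$ funnel a pair onto the axis'', ``the only reachable fixed nodes are \oStar and $M$'s node'') do not close it, because \oStar itself is the reachable fixed node and meeting there is exactly what a rendezvous algorithm wants.

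The paper's construction removes every reachable fixed point of the reflection, and this is where the hypotheses ``$k$ even'' and the presence of $M$ are really used. The $k$ agents are placed on $k$ \emph{consecutive} nodes, so the axis crosses the occupied block at the midpoint of an \emph{edge} (possible only because $k$ is even, and agents cannot meet mid-edge); all remaining nodes, \emph{including} \oStar, form one free segment with $M$ inside it, and $M$ patrols that segment forever, so no honest agent ever enters it or sees \oStar. Consequently the two fixed loci of the reflection are an edge midpoint and a permanently unreachable region, every reachable node has a distinct mirror image, and the symmetric pairs can never co-locate. If you relocate \oStar into $M$'s guarded segment and shift the axis onto the mid-edge of the occupied block, the rest of your argument (the invariant, the symmetric scheduling and the symmetric blocking by $M$) goes through essentially as you wrote it.
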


\pf
\begin{figure}
 \centering \epsfig{file=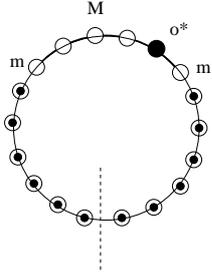, width=1.1in}
 \caption{An even number of agents are initially placed  in the lower part of the ring between $m$ and $m'$, and the  malicious agent can move in the remaining upper part . The agents cannot rendezvous. \label{ring2}}
 \end{figure}

Consider the following initial configuration fixed by an adversary: the honest agents are  initially placed in $k$ consecutive nodes and the remaining nodes, including node \oStar, are unoccupied (see Fig.~\ref{ring2}). Let $m$ and $m'$ be the two extremities of the unoccupied segment of the ring; The malicious agent $M$ is initially placed somewhere on this segment and thus, it can move between nodes $m,m'$ preventing any honest agent from visiting those nodes. The adversary splits the chain of consecutive initially occupied nodes into two subchains of equal length and forces any two agents to have a different orientation of the ring if and only if they belong to different subchains. The configuration is symmetric with an axis of symmetry crossing the chain of agents in the middle. Each honest agent has a symmetric counterpart on the other side of the axis. 
Suppose there is an algorithm $\cal A$ which solves rendezvous.  
Irrespective of the actions of the algorithm, the adversary can synchronize the agents so that at any given time, any two symmetric agents which initially belong to different subchains have the same input: They always arrive at symmetric nodes through the same (perceived) direction, and symmetric nodes are either unoccupied, occupied by the same number of agents, or adjacent to $m$ and $m'$ (i.e., the agents bump into $M$). Hence the configuration remains symmetric at every time step and any two symmetric agents belonging to different subchains will never occupy the same node at the same time.\qed


We now present an algorithm for solving rendezvous of $k$ agents, for any {\em odd} integer $k$, in an unoriented asynchronous ring network. Notice that in an unoriented ring, if we follow an algorithm similar to Algorithm RV-OR it is possible that the agents form two distinct groups that gather at two distinct nodes. However,  since the total number of agents is odd, exactly one of the two groups would have even number of agents, thus one of the agents of this group could move to collect all the other agents. The algorithm must ensure that there are at most two groups of agents, i.e. there are at most two distinct nodes where the agents stop in the initial phase. In our algorithm, an agent stops at \oStar only if it has seen it at least three times, while moving in the same direction. This implies that this agent has traversed the complete ring two times and while $M$ has moved at least once around the ring. So, there could be no agents moving in the opposite direction. On the other hand if some agent stops while bumping into $M$, then any agent moving in the same direction would reach this node with the stopped agent before reaching $M$ or \oStar. In all cases, there will be at most two nodes where the agents stop. 
When two or more agents have gathered at a node $v$, one of the agents called the \emph{searcher}\footnote{We select as searcher the second agent that arrives at node $v$.} reverses direction and moves to search for the other agents. The searcher only stops when it reaches the other node $w$ containing stopped agents. If the number of agents gathered at node $w$ is even then the searcher becomes a \emph{Collector} and it collects all agents and returns to node $v$. Note that the agent does not need to count the number of other agents as the algorithm depends only on the parity of the size of the group of agents. The complete algorithm, called {\sc RV-UR} is presented in a following table.

\begin{algorithm}
\begin{center}
\begin{description}
\item {\bf  Case $0$}. {\em Initial} state\\
Move clockwise until:
\begin{description}
	\item{\bf Case 0.1. You meet node \oStar unoccupied for the third time:} \\ Change state to {\em stopper};
	\item{\bf Case 0.2. You bump into $M$ trying to move from a node that hosts only you:} \\ Change state to {\em stopper};
	\item{\bf Case 0.3. You meet an agent not at node \oStar:} 
	\begin{description}
		\item{\bf Case 0.3.1. The agent you meet is alone and is a {\em stopper}:} \\ Change state to {\em transformer-1};	
		\item{\bf Case 0.3.2. Every other agent at the node is at state {\em final}:} \\ Change state to {\em stopper};
		\item{\bf Case 0.3.3. You meet a {\em stopper} and at least one agent at state {\em final}:} \\ Change state to {\em transformer-2};	
	\end{description}
\end{description}
\end{description}

\begin{description}
\item {\bf Case $1$}. State {\em transformer-1} \\
 Wait until all other agents change to state {\em final};\\
 Change state to {\em searcher};
\end{description}

\begin{description}
\item {\bf Case $2$}. State {\em searcher} \\
 Move counter-clockwise until you bump into $M$ while you try to move from a node $u$:
\begin{description}
	\item{\bf Case 2.1. You see one or more agents at $u$ and all of them are at state {\em final}:} \\ Change state to {\em stopper};	
	\item{\bf Case 2.2. You see no agent at $u$ or an agent not at state {\em final}:} \\ Change state to {\em collector};
\end{description}
\end{description}

\begin{description}
\item {\bf  Case $3$}. State {\em stopper}\\
	Wait until:
	\begin{description}
		\item{\bf Case 3.1. You see a {\em transformer-1} or {\em transformer-2}:} Change state to {\em final};	
		\item{\bf Case 3.2. You see a {\em collector}:} Follow {\em collector};	
		\item{\bf Case 3.3. You see a {\em terminator}:} Change state to {\em terminator};	
	\end{description}
\end{description}
	
\begin{description}
\item {\bf  Case $4$}. State {\em collector}\\
	Wait until every other agent at the node changes its state to {\em stopper};\\
	Collect everyone;\\
	Move clockwise collecting every agent you meet, until you meet an agent at state {\em final};\\
	Change state to {\em terminator};
\end{description}
	
\begin{description}
\item {\bf  Case $5$}. State {\em final}\\
	Wait until:
	\begin{description}
		\item{\bf Case 5.1. You see a {\em collector}:} Change state to {\em stopper};	
		\item{\bf Case 5.2. You see a {\em terminator}:} Change state to {\em terminator};	
	\end{description}
\end{description}

\begin{description}
\item {\bf  Case $6$}. State {\em transformer-2}\\
	Wait until every other agent at the node changes its state to {\em final};\\
	Change state to {\em final};
\end{description}

\begin{description}	
\item {\bf  Case $7$}. State {\em terminator}\\
	Wait until every other agent at the node changes its state to {\em terminator};\\
	Exit;
\end{description}

\end{center}
 \caption{RV-UR : Rendezvous in unoriented rings}
  \label{RV-UR}
\end{algorithm}

\begin{lemma}\label{lemma:RV3}
Consider an anonymous ring consisting of $n$ nodes, including a special node \oStar and one malicious agent. If $k \geq 2$ honest agents execute Algorithm {\sc RV-UR},  then, after a total number of  $O(kn)$ edge traversals the honest agents correctly rendezvous, if $k$ is odd.
\end{lemma}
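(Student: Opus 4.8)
The plan is to prove correctness and the $O(kn)$ bound by tracking the global state of the algorithm through a small number of phases. First I would establish the \emph{clustering phase}: every agent starts in state \textit{Initial} moving clockwise, and I claim it stops (becoming a \textit{stopper}, via Case 0.1, 0.2, 0.3.2) after at most $2n$ edge traversals. Indeed, an agent either meets \oStar three times while going clockwise — which forces it to traverse the ring twice, hence $O(n)$ moves — or it bumps into $M$, or it reaches a node with only \textit{final} agents. The key structural claim, mirroring the argument sketched in the text, is that at the end of this phase there are \textbf{at most two} nodes holding stopped agents. I would prove this by a case analysis: if some agent stops at \oStar via Case 0.1 then it has gone fully around the ring at least twice, so $M$ has had the chance to go around once, and therefore no other agent can be moving counter-clockwise past it — all agents clockwise of \oStar funnel into \oStar (or into a node where $M$ blocks them just before \oStar); symmetrically if an agent stops by bumping into $M$ at a node $u$, any agent arriving from the same side reaches $u$ before reaching $M$ or \oStar, because the FIFO property prevents it from overtaking. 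Consolidating these observations gives exactly one or two \textit{stopper} clusters.

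Next I would handle the \emph{merging phase}. If there is exactly one cluster, all $k$ agents are gathered and rendezvous is essentially done after they run the \textit{transformer}/\textit{final}/\textit{terminator} handshake (Cases 1, 3.1, 6, 7) to terminate cleanly; this uses $O(1)$ additional moves per agent, hence $O(k)$ total. If there are two clusters, at nodes $v$ and $w$, then because $k$ is odd exactly one cluster, say $w$, has an \textbf{even} number of agents and the other, $v$, has an \textbf{odd} number. The second agent to arrive at a cluster becomes the \textit{searcher} (through \textit{transformer-1}, Case 0.3.1 then Case 1) and moves counter-clockwise. I would argue the searcher from $v$ reaches $w$ in $O(n)$ moves — it cannot be permanently blocked by $M$, since a permanent block would mean $M$ separates the ring into two arcs each containing a cluster, contradicting the fact established earlier that rings admit no separating configuration; and $M$ can delay but not stop the searcher, since the adversary is asynchronous but the ring stays connected. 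At $w$ the searcher finds an even cluster (the other agents having all become \textit{final}, via \textit{transformer-1}/\textit{stopper} dynamics), so by Case 2.2 it becomes a \textit{collector}, sweeps all agents of $w$ back clockwise toward $v$ collecting everyone it meets (Cases 3.2, 4, 5.1), and on reaching the \textit{final} agents at $v$ it becomes a \textit{terminator}; then Cases 3.3, 5.2, 7 propagate termination. Meanwhile the searcher dispatched from $w$ (if any) travels toward $v$, finds an \emph{odd} cluster containing a non-\textit{final} agent or bumps into $M$ at a node with a non-final agent, triggering Case 2.2 as well — here I need to check the parity bookkeeping carefully so that this second searcher does not also become a collector and create a loop; the intended behavior is that exactly one collector forms, namely the one originating from the side whose searcher first discovers an all-\textit{final} even cluster.

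For the edge-traversal count I would simply add up: $O(n)$ per agent in the clustering phase gives $O(kn)$; the at-most-two searchers contribute $O(n)$ each; the single collector's sweep is $O(n)$ but it drags along up to $k$ agents, contributing $O(kn)$; and all the handshake transitions are $O(1)$ per agent, i.e. $O(k)$. Summing yields $O(kn)$ total edge traversals.

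The main obstacle, I expect, is the parity/ordering argument in the two-cluster case: showing that exactly one \textit{collector} is ever created and that it terminates at the odd cluster, so that the two groups merge into one rather than oscillating. This requires arguing about the relative timing of the two searchers and the state-transition handshakes, and carefully using oddness of $k$ to guarantee that the searcher entering the even cluster is the one that ``wins'' and becomes the collector, while the searcher entering the odd cluster always sees a non-\textit{final} agent (itself would not yet be final, or there is a leftover \textit{stopper}) and hence must \emph{not} follow the collector branch prematurely. A secondary subtlety is the asynchronous adversary: I must rule out $M$ trapping a searcher forever, which I do by invoking the non-existence of separating configurations in a ring together with the fact that a searcher that keeps bumping into $M$ will, since $M$ cannot occupy \oStar and block a searcher at the same time on a connected arc, eventually get past or reach its target — effectively $M$ can only impose finitely many detours on a finite-memory agent following a fixed counter-clockwise rule before the searcher reaches $w$.
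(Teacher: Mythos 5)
Your high-level decomposition (a clustering phase producing at most two stopper clusters, followed by a searcher/collector merging phase whose success hinges on the oddness of $k$) is the same as the paper's, and the $O(kn)$ accounting is fine. However, the part you yourself flag as the crux --- which searcher becomes the \emph{collector} --- is resolved backwards, and as stated the argument contradicts the algorithm. Case 2.2 fires when the searcher, upon bumping into $M$ at a node $u$, sees \emph{no agent or an agent not at state final}; Case 2.1 (all agents at $u$ are \emph{final}) turns the searcher into a \emph{stopper}, not a collector. Now trace the handshake at each cluster: the group of size $k_X$ loses one agent to the searcher role, the first stopper becomes \emph{final} upon seeing the \emph{transformer-1}, and the remaining $k_X-2$ arrivals pair up as (\emph{stopper}, \emph{transformer-2}) and all become \emph{final} --- except that when $k_X$ is \emph{odd} one unmatched \emph{stopper} survives. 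So the \emph{odd} cluster is the one containing a non-\emph{final} agent, and it is the searcher arriving \emph{there} that triggers Case 2.2 and becomes the collector; the searcher arriving at the even, all-\emph{final} cluster becomes a stopper via Case 2.1 and waits. Your proposal asserts the exact opposite (``the searcher entering the even cluster is the one that wins and becomes the collector'' and the searcher at the odd cluster ``must not follow the collector branch''), so under the actual transition rules your scenario produces no collector at the even cluster and an unintended one at the odd cluster relative to your narrative --- the correctness argument does not close as written. The fix is precisely the paper's: odd cluster retains a lone \emph{stopper} $\Rightarrow$ its incoming searcher becomes the unique collector, drags everyone clockwise to the all-\emph{final} even cluster, and turns \emph{terminator} there.

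A secondary, smaller inaccuracy: you argue the searcher cannot be trapped by appealing to the non-existence of separating configurations and to ``finitely many detours.'' The searcher never detours --- it moves counter-clockwise until it first bumps into $M$, and since $M$ is confined to the arc between the two bump nodes $m$ and $m'$ (it cannot cross honest agents), the searcher launched from one cluster necessarily traverses the $M$-free arc and reaches the other cluster's node before its first and only bump. That is the clean reason it arrives in $O(n)$ moves. Your clustering-phase claim that a triple visit to \oStar forces all agents to share an orientation is essentially the paper's observation (an agent with an oppositely-oriented peer can pass \oStar at most twice before bumping into $M$), so that part is acceptable, if informal.
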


\pf
\begin{figure}
 \centering \epsfig{file=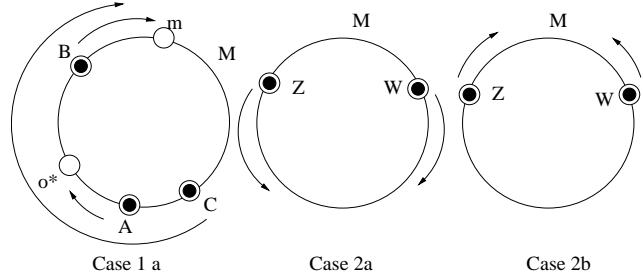, width=8.3cm}
 \caption{The cases of Lemma~\ref{lemma:RV3}. \label{f1}}
 \end{figure}
\noindent {\em Case 1}: All agents have the same orientation. Let $A$ be the first agent which exits the {\em initial} state (either by meeting node $o^*$ for the third time, or by bumping into $M$).
\begin{itemize}
\item Suppose that $A$ meets node $o^*$ (for the third time) and therefore $A$ becomes a {\em stopper} (see Fig.~\ref{f1}.1a). 
Consider the next agent $B$ which exits the {\em initial} state. Agent $B$ should bump into $M$. Let $m$ be the node where $B$ is located when it tries to move and bumps into $M$. 
$B$ becomes a {\em stopper}.
Then let $C$ be the next agent that exits the {\em initial} state. Agent $C$ should meet agent $B$ (notice that if $C$ passes from the special node $o^*$ will not stop to meet agent $A$). Thus $C$ passes from cases $(0.3.1) \rightarrow$ $(1) \rightarrow$ $(2.2) \rightarrow$ $(4) \rightarrow$ $(7)$ of the Algorithm {\sc RV-Ring}, collecting agent $A$ and ending at node $m$. Meanwhile, all other agents (if any) gather at node $m$, since they pass from cases $(0.3.2)$ and then possibly $(3.1)$, or $(0.3.3)$ and $(6)$, all of them ending at state {\em final} except possibly one which is at state {\em stopper}. Hence, when agent $C$ arrives at $m$, all the agents change to state {\em terminator} and exit.
\item Suppose that $A$ bumps into $M$ when it tries to move from a node $m$. Consider the next agent $B$ which exits the {\em initial} state. If $B$ meets node $o^*$ then we can argue similarly as in the previous paragraph by replacing agent $A$ there with $B$. If $B$ meets $A$, then we can again argue as in the previous paragraph by replacing agents $B$ and $C$ there with $A$ and $B$ respectively.
\end{itemize}

\noindent {\em Case 2}: Not all agents have the same orientation. 
We first observe that if there are at least two agents $X$ and $Y$ having a different orientation for the ring, then no agent will exit the {\em initial} state by stopping at the special node $o^*$. Indeed, let ${\cal C}^X$ and ${\cal C}^Y$ be the subsets of agents having the same orientation for the ring as agents $X$ and $Y$, respectively. Due to the presence of the malicious agent $M$ in the ring (and the fact that $M$ can not cross an honest agent), an agent from ${\cal C}^X$ (or ${\cal C}^Y$) can pass at most twice from the special node $o^*$ 
and then it has to bump into $M$:  
a) Suppose that $M$ is after an agent $Z \in {\cal C}^X$ and before an agent $W \in {\cal C}^Y$ on the direction that $Z$ is moving (see Fig.~\ref{f1}.2b).
	Then $Z$ may pass at most once from the special node $o^*$ before it bumps into $M$.  
b) Suppose that $M$ is before an agent $Z \in {\cal C}^X$ and after an agent $W \in {\cal C}^Y$ on the direction that $Z$ is moving (see Fig.~\ref{f1}.2a).
	 Then $Z$ may pass at most twice from the special node $o^*$ before it bumps into $M$.

Hence let $A \in {\cal C}^X$ without loss of generality be the first agent which exits the {\em initial} state and bumps into $M$ when it tries to move from a node $m$, and let $B$ (which should belong to ${\cal C}^Y$ due to the FIFO links)  be the first agent which exits the {\em initial} state and bumps into $M$ when it tries to move from a node $m'$. Both agents $A$ and $B$ become {\em stoppers}. Then let $C$ be the next agent that exits the {\em initial} state and meets $A$ or $B$. Suppose without loss of generality that $C$ meets $A$ (and therefore $C \in {\cal C}^X$). Hence $C$ becomes a {\em searcher} and moves counter-clockwise. All other agents (if any) belonging in ${\cal C}^X$ will gather at node $m$. A similar situation may appear at node $m'$ if there are more agents belonging in ${\cal C}^Y$ (i.e., having the same orientation for the ring as agent $B$). Nevertheless there will be at most two {\em searchers} $C,D$ and they will have different orientation for the ring. All other agents will gather at nodes $m$ and $m'$ before the {\em searchers} $D$ and $C$ arrive at $m$ and $m'$ respectively. If the total number $k$ of the honest agents is odd, then exactly one of the sets ${\cal C}^X, {\cal C}^Y$ has an odd number of agents. Suppose without loss of generality that ${\cal C}^X$ has an odd number of agents. Then at node $m$ exactly one of the agents there, it will be at state {\em stopper} 
while at node $m'$ there will be only agents at state {\em final}, before the corresponding searcher arrives at each node. Hence only the searcher which reaches $m$ will change its state to {\em collector}, will collect all agents and gather with everyone at node $m'$ (the other searcher which arrives at $m'$ will stay there). 

In any case above, it is easy to see that each agent may traverse the network at most a constant number of times before rendezvous occurs. Hence, the total number of edge traversals (i.e., steps) for all the agents is $O(kn)$.\qed

The following result summarizes the results of this section:

\begin{theorem}\label{ring:iff}
In any anonymous and asynchronous ring with a special node \oStar and one malicious agent, $k$ honest agents having constant memory and no knowledge about their number, can solve the rendezvous problem if and only if either the ring is oriented or $k$ is odd.
\end{theorem}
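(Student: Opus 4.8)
The plan is to obtain the theorem as a direct corollary of the three constructive and impossibility results already established for rings, so the work is mostly bookkeeping about the direction of the biconditional and about what ``can solve'' means. Throughout I take ``can solve the rendezvous problem'' to mean that there is a single algorithm (for finite automata with constant memory, unaware of $n$ and $k$) that achieves rendezvous from \emph{every} initial placement of the $k$ honest agents and the malicious agent; the case $k=1$ is vacuous, so assume $k \ge 2$.

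For the ``if'' direction I would argue by cases on the ring. If the ring is oriented, Lemma~\ref{lemma:oriented} already exhibits such an algorithm, namely RV-OR, which works from any initial configuration and any ring size in the presence of the malicious agent; hence rendezvous is solvable regardless of the parity of $k$. If the ring is unoriented but $k$ is odd, Lemma~\ref{lemma:RV3} provides the algorithm RV-UR, which rendezvous all honest agents in $O(kn)$ edge traversals whenever $k$ is odd; note that RV-UR uses only the capabilities assumed in the model (constant memory, local communication, the ability to recognize \oStar). This covers both disjuncts, so ``oriented or $k$ odd'' implies solvability.

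For the ``only if'' direction I would prove the contrapositive: if the ring is unoriented \emph{and} $k$ is even, then rendezvous is impossible. This is exactly Lemma~\ref{lemma:undirected-even}: it produces one concrete initial configuration --- the $k$ honest agents on a contiguous arc, split into two equal sub-arcs of opposite perceived orientation, with the malicious agent confined to the complementary arc containing \oStar --- on which the adversary can keep the configuration symmetric forever, so no algorithm (even one knowing $k$ and having unlimited memory) can gather the two symmetric halves. Since a correct algorithm must succeed from every initial configuration, the existence of this single bad instance shows rendezvous is not solvable when the ring is unoriented and $k$ is even. Combining the two directions yields the stated equivalence.

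The only real subtlety, and the point I would be most careful about, is the quantifier hidden in ``can solve'': the positive results are \emph{universal} algorithms (one algorithm, all configurations, all $n$), while the impossibility is witnessed by a \emph{single} configuration, so the two match up cleanly only once one fixes this reading. I would also remark that in a ring there are no separable --- hence no separating --- configurations, by the discussion opening Section~\ref{ring}, so no initial configuration is excluded on those grounds and the dichotomy is genuinely between the oriented case and the parity of $k$; there is no further ``infeasible'' class of instances to carve out. No new combinatorial argument is needed beyond what the cited lemmas provide.
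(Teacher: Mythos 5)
Your proposal is correct and matches the paper's own treatment: the paper presents Theorem~\ref{ring:iff} explicitly as a summary of the section, with the ``if'' direction given by Lemmas~\ref{lemma:oriented} and~\ref{lemma:RV3} and the ``only if'' direction by the contrapositive via Lemma~\ref{lemma:undirected-even}, exactly as you decompose it. Your added remarks on the quantifier in ``can solve'' and on the absence of separable configurations in rings are consistent with the paper's discussion and require no further argument.
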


We briefly consider the case when there could be multiple malicious agents in the network.
In this case, rendezvous is feasible only if all the malicious agents are located in a continuous segment of the ring with no honest agent in between. This scenario is equivalent to the one with a single malicious agent and thus the same algorithm would work in this case.



\section{Rendezvous in an Oriented Mesh Network}
\label{mesh-tori}
We now study the problem in an oriented mesh network. In view of Lemma~\ref{separable}, rendezvous is impossible for separable initial configurations. Hence, in this section we study the problem for a special class of non separable initial configurations and we give an algorithm that solves the problem for this type of configurations. In particular, we focus on initial configurations where the induced subgraph of the occupied nodes is connected without holes, i.e., there is no connected set of unoccupied nodes surrounded by occupied nodes. At the end of the section we discuss the solvability of the problem in other classes of initial non separable configurations.

First observe that even in configurations that consist of a simple path of occupied nodes, the problem is unsolvable in the considered model due to network asynchronicity: Initially all agents have the same input and thus (following any potentially correct algorithm), they should all try to move; however, an adversary may slowdown all agents, except for one not located at the endpoints of the path, hence creating a separating configuration. Thus, by Lemma~\ref{separating} the problem is unsolvable.

Therefore, the agents need to be able to gain some knowledge about their current configuration before they move in order to avoid creating a separating configuration. We enhance our model by giving the agents, the capability to discover all occupied nodes within a distance of $d$-hops. 

\begin{definition} \label{visibility} We say that an agent $A$ located at a node $x$ {\em can see (or scan) at a distance $d$} or it has {\em  d-visibility} if $A$ can decide for any node $u$ within a distance of $d$ hops from $x$, whether $u$ is occupied or not by an honest agent. 
\end{definition}

We emphasize that, if a node $u$ scanned by agent $A$ is occupied, $A$ cannot tell how many agents are in $u$, or read their states.
When the agents have a $d-$visibility capability we assume that moves are instantaneous, i.e., an agent cannot be traveling along an edge while another agent is scanning its neighbourhood. 
Unfortunately, as we show below, even when the agents have $1-$visibility (i.e., they can only scan their neighbours), the problem remains unsolvable for some connected without holes configurations.  

\begin{lemma}\label{impossible-mesh-d1}
The rendezvous problem is unsolvable in an oriented mesh with a malicious agent for initial connected without holes configurations, even when the agents are capable of scanning their adjacent nodes.
\end{lemma}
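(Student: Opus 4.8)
The plan is to exhibit a specific initial connected‑without‑holes configuration in a mesh together with an adversary strategy (controlling the asynchronous scheduler and the malicious agent $M$) that forces the agents to create a separating configuration, after which Lemma~\ref{impossibility-anytime} finishes the job. The key point, as in the remark preceding the statement, is that $1$‑visibility only lets an agent see whether its four neighbours are occupied — it cannot see $M$ unless $M$ is exactly one hop away, and it cannot tell how far a "hole" or cut develops two hops out. So I want a configuration in which every agent, looking only at its neighbourhood, sees a situation that looks safe, yet some legal move (which a correct algorithm must eventually make, by the asynchrony argument) opens up a free connected cut‑set for $M$.

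First I would set up the configuration: take a "thick" connected blob of occupied nodes shaped so that it has a narrow waist — for instance two roughly square clumps of occupied nodes joined by a short bridge that is, say, two or three nodes wide, with $M$ parked just outside the blob adjacent to the bridge region but not adjacent to any agent. Crucially the bridge should be wide enough that, from the local ($1$‑hop) view, no agent on the bridge can distinguish the situation from being deep inside a large occupied region, yet thin enough that a single agent vacating a bridge node (while its symmetric partner is delayed) leaves a connected strip of free nodes through which $M$ can pass, separating the two clumps. Then I would invoke the standard asynchrony argument: initially all agents have identical inputs (up to the orientation the mesh provides), so any purportedly correct algorithm either has every agent stay forever — which obviously fails rendezvous unless they started together, and we choose the configuration with agents at distinct nodes — or it directs the agents to move; the adversary delays all of them except one chosen bridge agent, whose departure creates the free cut‑set. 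One must check that after that single move the resulting configuration is genuinely separating in the sense of Definition~\ref{separating}: the cut‑set $F_t$ is connected and free at time $t$, and $M$, starting from its initial position (which we place on the node $x_0$, a node free at time $0$ adjacent to the bridge), can reach it along nodes that become free — here the path is trivial or very short because $M$ was planted right next to the waist. Then Lemma~\ref{impossibility-anytime} gives impossibility of rendezvous.

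The main obstacle I expect is the indistinguishability requirement: I must design the bridge so that the local view of the "sacrificial" agent before it moves is identical to the local view of agents that a correct algorithm would legitimately move in some other (safe) configuration, so that the algorithm cannot protect itself by simply never moving off a bridge node. Equivalently, I need a second, genuinely solvable configuration that agrees with the bad one on every agent's $1$‑neighbourhood, forcing any algorithm to behave the same way on both; making the blob thick everywhere except at the hidden waist is exactly what achieves this. A secondary technical point is ruling out the "everyone waits" strategy — but that is handled by starting from at least two occupied nodes, since permanent immobility never achieves rendezvous from a scattered start. Assembling these pieces, together with a careful but routine verification that the post‑move configuration meets Definition~\ref{separating}, completes the argument.
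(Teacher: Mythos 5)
Your high-level strategy (asynchronous scheduling plus a pre-positioned malicious agent, forcing one move that creates a separating configuration, then invoking Lemma~\ref{impossibility-anytime}) is the right one, and you correctly identify indistinguishability as the crux. But your concrete construction has an internal contradiction that the plan does not resolve. You want a waist that is simultaneously \emph{wide enough} that the bridge agent's $1$-hop view looks like the interior of a thick blob, and \emph{thin enough} that one agent's departure leaves a connected free cut-set. These requirements are incompatible: if the bridge is two or three nodes wide, then after one agent leaves, the remaining bridge row is still a path of occupied nodes joining the two clumps, so no set of free nodes can separate them --- the resulting configuration is not separable, and the malicious agent cannot pass. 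For a single departure to be fatal, the vacated node must be a cut vertex of the occupied subgraph, which forces the bridge to be one node wide there, which in turn means the critical agent sees two free neighbours and the waist is not ``hidden'' in the way you intend. So the configuration you describe does not yield a separating configuration after one move, and the argument stalls.

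What is missing is a specific confusable pair of \emph{local views}, not a pair of globally indistinguishable configurations. The paper's proof uses the corner view (exactly two occupied neighbours in perpendicular directions, the ``gamma'' configurations of Figure~\ref{mesh-impossible-d1}): in a three-agent L-shape, moving the corner agent while the adversary delays the others creates a separating configuration, so a correct algorithm must freeze every agent with a corner view; but in a $2\times 2$ block of four agents, \emph{every} agent has a corner view, so such an algorithm deadlocks there and rendezvous fails. Note that your fallback for the ``everyone waits'' case (global immobility fails from a scattered start) is too coarse: the adversary algorithm need only be immobile on the critical views, so you must exhibit a configuration in which \emph{all} agents simultaneously have critical views --- that is exactly what the $2\times 2$ block provides and what your thick-blob construction cannot. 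Without identifying such a view/configuration pair, the proof does not go through.
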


\pf
Consider any algorithm that moves an agent $A$ having exactly two adjacent occupied nodes according to its local view as shown in any of the four configurations of Figure~\ref{mesh-impossible-d1} right. 
Notice that if the algorithm also moves agents having exactly one adjacent occupied node in any of those configurations, then an adversary may execute $A$'s move and delay any other agents' moves. 
Then clearly any such algorithm cannot be correct since after $A$'s move in any configuration of Figure~\ref{mesh-impossible-d1} right, the configuration becomes separating and by Lemma~\ref{separating} the problem is unsolvable. Therefore any potentially correct algorithm should not move agents with exactly two adjacent occupied nodes and local view any of those shown in Figure~\ref{mesh-impossible-d1} right. 
However, in a similar new configuration where there is a fourth agent occupying the empty node adjacent to two occupied nodes in any configuration of Figure~\ref{mesh-impossible-d1} right, no agent can move (since any agent cannot differentiate this new configuration from one of the configurations in Figure~\ref{mesh-impossible-d1} right). Hence the agents in such a configuration cannot meet.
\begin{figure}
\vbox to 2mm{\hspace{0cm} \epsfig{file=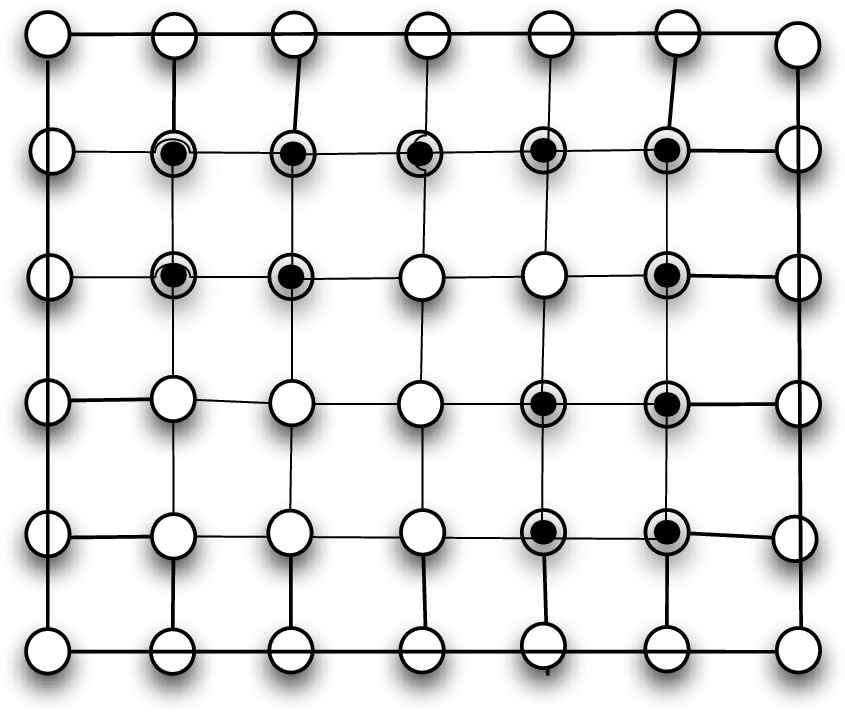, width=1in} 
\hfill \hbox to 75mm{\hspace{0cm}
\epsfig{file=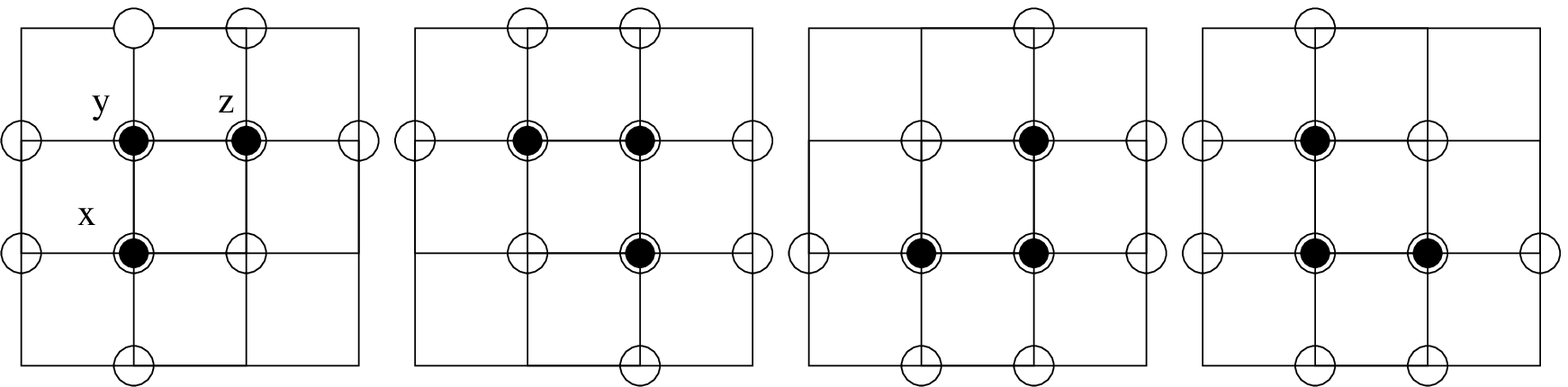, width=2.7in} }\vss}
\vspace*{2cm}
\caption{Left: Every occupied node is adjacent to at least two other occupied nodes. Right: A gamma configuration and its three rotations. \label{mesh-impossible-d1}}
\end{figure}
\qed

Additionally, as Figure~\ref{mesh-impossible-d1} left demonstrates, an agent either having: 1) exactly two adjacent occupied nodes on their West and East or, 2) exactly two adjacent occupied nodes on their South and North or, 3) three adjacent occupied nodes, cannot move since  the resulting configuration will be separating. Therefore, for any algorithm that moves an agent adjacent to two or three occupied nodes there are infinitely many configurations for which this algorithm fails to gather the agents (i.e., configurations where every occupied node is either adjacent to two or to three occupied nodes).


Hence we further equip the agents with the capability of discovering the occupied nodes within a two-hops distance. In that case, as we show below, the problem can be solved for any connected without holes initial configuration. 

We present an algorithm which instructs the agents to move only to occupied nodes in a way that they maintain the connectivity and they do not create holes. In order to describe the algorithm we define eleven local configurations as shown in Fig.~\ref{mesh11}. In these configurations, empty circles represent free nodes, while circles containing black dots represent occupied nodes. The remaining vertices on the figures represent nodes which may be either occupied or free. The agent (let us call it $A$) which is located below a horizontal arrow in cases (a-g), moves horizontally as depicted by the arrow. The agent (let us call it $B$) which is located left of a vertical arrow in cases (h-m), moves vertically as depicted by the arrow. Hence the algorithm can be described as follows: 

\smallskip
\noindent {\bf Algorithm {\sc RV-Mesh}}: {\em If an agent has a view (within two hops) like the one of agent $A$ or  $B$ described before, then this agent moves towards the direction shown by the corresponding arrow; otherwise the agent does not move.} 

\smallskip
Nodes which are within two hops from the scanning agent and are not shown in those configurations can be either occupied or free.
If the location of the scanning agent is close to the border of the mesh and some of the nodes in those eleven configurations do not exist, then the agent acts as it would act if those nodes existed in its view and were free. Moreover, while an agent $A$ located at a node $u$ is executing its scan or compute phase then no other operation can take place at $u$ before $A$ moves or decides to stay (i.e., no other agent at $u$ can start scanning and no other agent can arrive at $u$). That is, operations at a node $u$ are executed in mutual exclusion. Notice that if two adjacent agents want to swap positions they can only do it at the same time. 
We now present three lemmas.

 \begin{figure}
 \centering 
 \epsfig{file=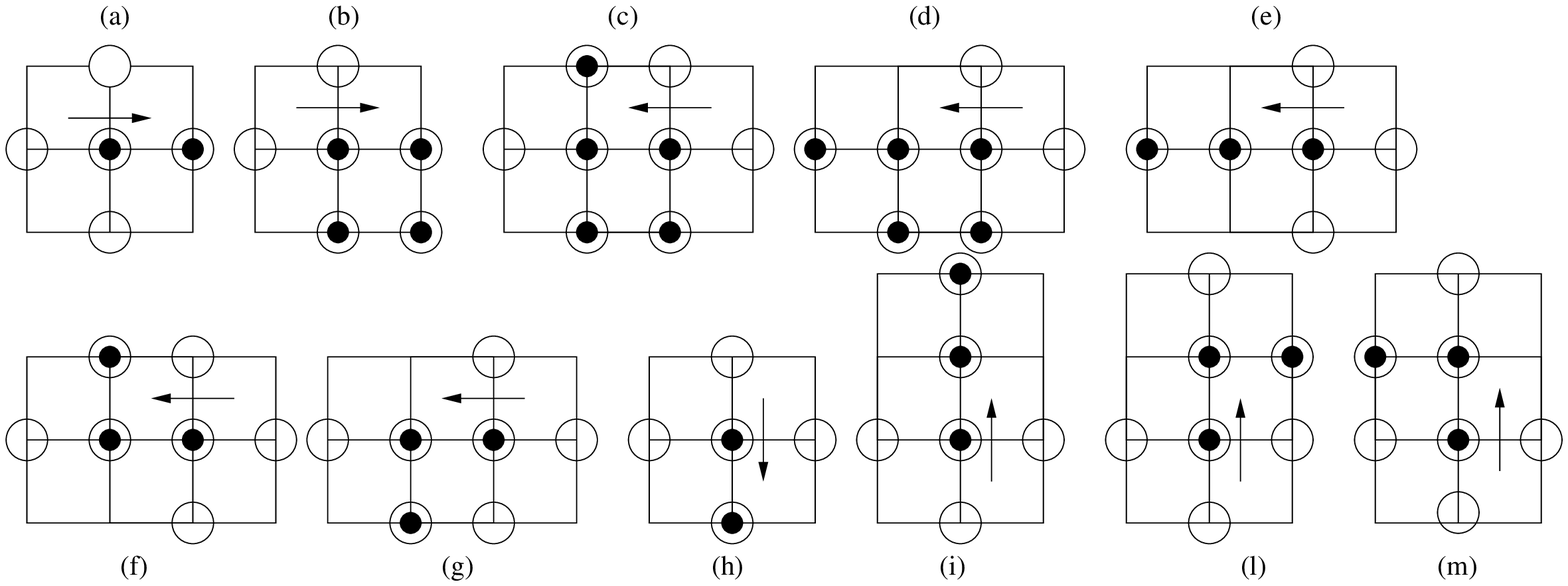, width=8.3cm}
 \caption{View of the scanning agent located below (cases a-g) or left (cases h-m) of the depicted arrow. 
 Occupied nodes are depicted as cycles containing black dots, while free nodes are depicted as empty cycles. 
 Nodes which are within two hops from the scanning agent but not shown, can be either occupied or free. The scanning agent  will move East in cases $(a, b)$, West in cases $(c, d, e, f, g)$, South in case $(h)$, and North in cases $(i, l, m)$. \label{mesh11}} 
\end{figure}

\begin{lemma}\label{mesh-correct-start}
Given an $n\times m$ oriented mesh, for any connected configuration without holes of at least three occupied nodes, there is at least one agent whose view is in one of the configurations depicted in Fig.~\ref{mesh11}.
\end{lemma}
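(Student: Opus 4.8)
<br>

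The plan is to establish the statement by an extremal argument followed by a bounded case analysis on the two-hop view of a suitably chosen agent, in which connectivity and the no-holes property are used to discard every configuration where no agent would be movable. Write $S$ for the set of occupied nodes; by hypothesis $S$ is connected, hole-free and $|S|\ge 3$, so $S$ is a simply connected polyomino of the grid. I shall freely use the stated border convention, treating cells outside the mesh as free, so that an agent near the frame of the mesh behaves exactly like one with free padding around it.

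First I would fix an extremal cell. Let $r$ be the smallest row index meeting $S$, and among the occupied cells of row $r$ let $u=(r,c)$ be the one of smallest column index. By minimality of $r$, all cells of rows $r-1$ and $r-2$ are free; by minimality of $c$, the cells $(r,c-1)$ and $(r,c-2)$ are free. Hence every cell within two hops of $u$ that lies weakly North or weakly West of $u$ is free, and the only cells within two hops of $u$ whose occupancy is not already forced are those strictly East, strictly South, or South-East of $u$. Since $|S|\ge 3$ and $S$ is connected, $u$ is not isolated, so at least one of its neighbours $E=(r,c+1)$ and $D=(r+1,c)$ is occupied.

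Next I would run the case split. If exactly one of $E,D$ is occupied, then $u$ is a leaf of $S$, and inspecting the handful of still-undetermined cells shows that the view of $u$ is one of the East-moving configurations (when $E$ is occupied) or the South-moving configuration (h) (when $D$ is occupied); then we are done. If both $E$ and $D$ are occupied, then either $(r+1,c+1)$ is occupied too --- so $u$ sits at a corner of a $2\times 2$ occupied block, its view is again an East-moving configuration, and pushing $u$ into $E$ keeps $S$ connected and hole-free since the vacated cell retains two free sides --- or $(r+1,c+1)$ is free, in which case $u$ is a ``gamma'' corner and, under {\sc RV-Mesh}, cannot move. In the last case I would walk away from $u$ along the boundary of $S$: follow the maximal occupied horizontal run of row $r$ to its East end; if that end cell is again a gamma corner, follow the occupied vertical run hanging below it; and so on, alternately East and South. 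At each turn one checks, crucially using that $S$ has no holes --- so the free ``pocket'' at the inside of each gamma corner is never enclosed --- that the current end cell is either a leaf, or sits at a $2\times 2$ occupied block, hence matches one of the configurations (c)--(g), (i), (l), (m), or else is a through-cell so that the walk continues. A potential argument (the walk stays in the finite grid and does not retrace a boundary edge) shows the walk terminates, necessarily at a movable cell; the hypothesis $|S|\ge 3$ guarantees that the local picture there actually contains the black dots demanded by Figure~\ref{mesh11}. The symmetric extremal choices (bottommost row, leftmost or rightmost column) are handled identically.

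The step I expect to be the main obstacle is precisely the exhaustiveness of this analysis: one must verify that \emph{every} completion of the two-hop neighbourhood of the chosen end cell that is consistent with ``$S$ connected and hole-free'' falls into one of the eleven pictures, and that the gamma-corner walk cannot loop around $S$ forever without exposing a leaf or a $2\times 2$ block. The no-holes assumption is exactly what forces the walk to expose such a movable feature, so the whole argument hinges on converting that topological fact into clean, finite bookkeeping along the boundary of $S$.
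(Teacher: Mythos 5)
Your opening move coincides with the paper's: pick an extremal occupied cell $u$ whose North and West neighbours are guaranteed free, and split on the occupancy of its East neighbour, South neighbour and South-East diagonal. (Two small cautions there: the paper orders the extremal choice as ``westmost column, then topmost cell in it'' rather than ``topmost row, then westmost cell in it'', and with either ordering your claim that \emph{every} two-hop cell weakly North or weakly West of $u$ is free overreaches --- e.g.\ with your choice the cell $(r+1,c-1)$ may well be occupied. This matters only insofar as the eleven pictures constrain such diagonal cells, but the ``hence'' should be weakened.) The three easy subcases (leaf to the East, leaf to the South, top-left of a $2\times2$ block) are handled the same way in both proofs.

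The genuine gap is in the remaining case, where $u$ is a gamma corner and cannot move. Your boundary walk ``alternately East and South'' is not a complete traversal scheme for hole-free polyominoes, and the assertion that it ``terminates, necessarily at a movable cell'' is precisely the content of the lemma, which you explicitly defer rather than prove. Concretely, take the occupied set $\{(0,0),(0,1),(0,2),(0,3),(1,0),(1,3),(2,3),(3,0),(3,1),(3,2),(3,3)\}$: it is connected and hole-free (the interior free pocket escapes to the exterior through $(2,0)$), the cells $(0,0)$ and $(0,3)$ are gamma corners, and your walk proceeds East to $(0,3)$, South to $(3,3)$, and there stalls at a $180^\circ$-rotated gamma corner ($N$ and $W$ occupied, $NW$ free) with no East or South continuation available. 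The lemma still holds for this configuration --- $(1,0)$ and $(3,0)$ are movable leaves --- but your walk never reaches them, so the argument as written does not establish existence. Extending the walk to a full boundary traversal reintroduces exactly the question you flag as ``the main obstacle'': why the traversal cannot circle the whole shape meeting only gamma corners, horizontally/vertically sandwiched cells, and bottom cells of $2\times2$ blocks. The paper closes this by a different device: it splits the configuration at $u$ into two ``generalised trees'' (using hole-freeness to rule out a cycle through the free diagonal cell $z$) and then argues, with an explicit four-direction case analysis on how a branch can extend (West, North, East, South, including the delicate South case where both top corners of the branch attach further), that every branch has a movable end. You would need to supply an argument of comparable strength --- for instance a turning-number or innermost-branch induction --- before the walk-based proof goes through.
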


\pf
Select the column with at least one occupied node which is closer to the left (West) border of the mesh. From that column select the upper (North-West) 
occupied node $u$. Hence the two nodes West  and North of $u$ are free. Since the configuration is connected, at least one of the nodes East (say $w$) or South (say $v$) of $u$, is occupied. If only $w$ is occupied then the agents at $u$ have the view of Fig.~\ref{mesh11} (a) and move towards East. If only $v$ is occupied then the agents at $u$ have the view of Fig.~\ref{mesh11} (h) and move towards South. If both $w, v$ are occupied then if the node which is East of $v$ and South of $w$ (say $z$) is occupied, the agents at $u$ have the view of Fig.~\ref{mesh11} (b) and move towards East. If however, node $z$ is free, then the agents at $u$ do not move. In this last case, since the configuration is connected and without  holes, and thus cycles, it is composed of the following two `generalised' trees: One tree starts with edge $(u,v)$ and the other starts with edge $(u,w)$. Each occupied node of the configuration different from $u$, appears in exactly one of those two trees (otherwise there would be a cycle of occupied nodes containing node $z$).
We call these trees `generalised' because some parts of the trees could be `thick' (i.e., composed of more than one row or column of occupied nodes). Notice that there should be a frontier of free nodes which extends from node $z$ to either the East or the South border of the mesh, which prevents any two occupied nodes belonging to different trees, from being connected through any path of occupied nodes that is not passing from node $u$ (since if such a path exists for two nodes, then the free node $z$ would be contained in a cycle of occupied nodes). We can now show that at least one agent on some branch of a tree will move according to one of the views of Fig.~\ref{mesh11}. In fact we prove that the branches of the tree (which extend East, West, North or South) shrink starting from their ends. First it is easy to see that for occupied nodes which are connected only to their East or to their South, the agents there, have views of Fig.~\ref{mesh11} (a) and (h) respectively, and they will move accordingly. For occupied nodes which are connected only to their West or to their North, the agents there, have views of Fig.~\ref{mesh11} (e,f,g) and  (i,l,m) respectively, and they will move accordingly. For branches with blocks of at least $4$ occupied nodes:
\begin{enumerate}
\item If the branch extends to the West then its upper leftmost occupied node $x$ is connected only to its East and South, the agents at $x$ have the view of Fig.~\ref{mesh11} (b) and they move towards East. After repeating such moves and some others based on the view of Fig.~\ref{mesh11} (a), the branch will be eliminated.
\item If the branch extends to the North then its upper leftmost occupied node $x$ is connected only to its East and South, the agents at $x$ have the view of Fig.~\ref{mesh11} (b) and they move towards East. After repeating such moves and some others based on the view of Fig.~\ref{mesh11} (h) and maybe some moves based on the views of Fig.~\ref{mesh11} (c,d), the branch will be eliminated. 
\item If the branch extends to the East then its upper rightmost occupied node $y$ is connected only to its West and South, the agents at $y$ have the view of Fig.~\ref{mesh11} (c,d) and they move towards West. After repeating such moves and some others based on the view of Fig.~\ref{mesh11} (e,f,g), the branch will be eliminated.
\item If the branch extends to the South then:
	\begin{itemize} 
	\item if its upper leftmost occupied node $x$ is connected only to its East and South or its upper rightmost occupied node $y$ is connected only to its West and South, the agents at $x$ or $y$ can repeatedly move based on the views of Fig.~\ref{mesh11} (b,a,i,l), or (c,d,e,f,g) respectively, and the branch will be eliminated.
	\item if both its upper leftmost occupied node $x$ is connected to its West or North
	and its upper rightmost occupied node $y$ is connected to its East and North  
	 then in at least one of those nodes $x,y$ those extra branches can be eliminated as above so that case $3$ above will hold.
	\end{itemize}
\end{enumerate}
\qed

\begin{lemma}\label{mesh-correct-meet}
Given an $n\times m$ oriented mesh, consider a connected configuration of $k$ agents in two occupied nodes. According to Algorithm {\sc RV-Mesh}, after a total number of at most $k+1$ edge traversals there will be only one occupied node.
\end{lemma}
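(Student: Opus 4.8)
The plan is to analyze the situation where all $k$ agents sit in exactly two occupied nodes $u$ and $v$, which (since the configuration is connected) must be adjacent. I would first argue that these two nodes are joined by an edge, say a horizontal one with $v$ immediately East of $u$ (the vertical case is symmetric by the obvious reflection of the view catalogue in Fig.~\ref{mesh11}). Now look at the agents at $u$: their neighbourhood within two hops contains the occupied node $v$ to the East and, crucially, \emph{no other occupied node anywhere} (all other nodes, whether at distance one or two, are free, since the only occupied nodes are $u$ and $v$). I claim this local view matches case (a) of Fig.~\ref{mesh11}: an agent with exactly one adjacent occupied node, lying to its East, and with the relevant surrounding nodes free. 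Hence the agents at $u$ are instructed to move East, into $v$.

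The next step is to track the resulting sequence of moves. Once one agent leaves $u$ for $v$, the intermediate configuration has either $u$ still occupied (other agents remained) with $v$ occupied — the same two-node configuration with one fewer agent at $u$ — or $u$ becomes free and all $k$ agents are gathered at $v$, and we are done. By the mutual-exclusion assumption on operations at a node and the FIFO property of the link $\{u,v\}$, the agents at $u$ cross one at a time, each seeing the same view (a) when it is its turn (the presence of the now-empty or still-occupied status of $u$ does not change the view for an agent \emph{at} $u$: it still sees only $v$ occupied). I would also check that an agent that has arrived at $v$ does not move away: at $v$ the only adjacent occupied node is $u$ (to its West), which is case (e), (f), or (g) — move West — so in principle the agent at $v$ could move back into $u$. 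This is the delicate point, and the way to handle it is the standard asynchronous-adversary argument combined with the edge-traversal bound: I should argue instead that we need only exhibit that \emph{some} finite schedule of at most $k+1$ traversals ends in a single occupied node, or — reading the lemma more carefully — that Algorithm RV-Mesh, under the mutual-exclusion discipline, forces the count. The cleanest route: observe that as long as both $u$ and $v$ are occupied, an agent at whichever of the two nodes is the "West" one moves East; pick the convention consistently (the West node is $u$), so all motion is from $u$ into $v$, never back, because an agent physically at $v$ whose view is (e)/(f)/(g) would move West only if $u$ is occupied — but then after it moves, $u$ and $v$ have swapped roles unless... — here I must be careful. I expect the actual argument to hinge on the observation that when only two adjacent nodes are occupied, exactly one of them (say the one further West, or the one that is the "tail") has view (a)/(h) and the other does \emph{not} have any of the eleven triggering views, because its West/North neighbour being occupied is accompanied by configuration details that exclude cases (c)--(g) and (i)--(m); so only the tail node's agents move, one at a time into the head node, giving exactly (number of agents at the tail) $\le k$ traversals, and the "$+1$" absorbs the vertical/horizontal bookkeeping or the final settling move.

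The main obstacle, then, is verifying that in a two-node occupied configuration, the agents at the "head" node $v$ genuinely have a non-triggering view — i.e., that none of the West/North-moving cases (c)--(g), (i)--(m) fires at $v$ — so that all flow is one-directional. This requires inspecting Fig.~\ref{mesh11} cases (c)--(g): each of them requires, besides the occupied West (resp. North) neighbour, at least one further occupied node among the two-hop neighbourhood (e.g. an occupied node two steps West, or occupied corners), which is absent when only $u,v$ are occupied. Granting this inspection, the bound follows immediately: all agents originally at the tail node traverse the single connecting edge exactly once each, no agent ever traverses twice, the head node's agents never move, so the total is at most $k$, and we may safely write $k+1$. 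I would close by remarking that the mutual-exclusion and FIFO assumptions guarantee the crossings are well-defined and serialize correctly, and that the case of $u,v$ being vertically adjacent is handled identically using cases (h) and (i)/(l)/(m) in place of (a) and (c)--(g).
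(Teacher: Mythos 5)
Your overall setup (the two occupied nodes are adjacent; reduce to the horizontal case; the West node's agents have view (a) and move East; handle the vertical case symmetrically) matches the paper, and you correctly flag the delicate point: whether the agents at the East node $v$ can fire one of the West-moving views (e)--(g) and move back toward $u$. But you resolve that point in the wrong direction. You conjecture that each of the cases (c)--(g) requires, beyond the occupied West neighbour, a further occupied node in the two-hop neighbourhood, so that the ``head'' node is non-triggering, all flow is one-way, and the total is at most $k$. That inspection fails: the paper's proof of Lemma~\ref{mesh-correct-start} states explicitly that an occupied node connected \emph{only} to its West has one of the views (e), (f), (g) and moves West --- and this is essential there, since a branch extending East must shrink from its Eastern endpoint, whose sole occupied neighbour is to its West. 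Hence in the two-node row configuration \emph{both} nodes have triggering views aimed at each other, and your claim that the head node's agents never move is false.

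The paper's actual argument embraces the collision rather than excluding it: one agent from $u$ (view (a)) and one agent from $v$ (view (e), possibly based on an earlier scan) may swap places simultaneously across the connecting edge --- the algorithm explicitly permits two adjacent agents that want to exchange positions to do so only at the same time --- after which the views change, only the agents on the West node continue to move East, and everyone gathers at $v$. The single swap costs two edge traversals, one of them ``wasted,'' which is exactly why the stated bound is $k+1$ and not the $k$ your argument would yield. Without this swap analysis you have neither justified the bound nor ruled out a perpetual back-and-forth between $u$ and $v$, so the proposal as written has a genuine gap.
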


\pf
If the occupied nodes are in a row, then it may happen that two agents move based on views (not perceived from the same present configuration) of Fig.~\ref{mesh11} (a and e), swapping places. However, then only the agents on the West will move according to view of the Fig.~\ref{mesh11} (a), and they all gather at one node. 

If the occupied nodes are in a column, then it may happen that two agents move based on views (not perceived from the same configuration) of Fig.~\ref{mesh11} (h and i), swapping places. However, then only the agents on the North will move according to view of the Fig.~\ref{mesh11} (h), and they all gather at one node.  
\qed

\begin{lemma}\label{mesh-correct-decrease}
Given an $n\times m$ oriented mesh, consider any connected configuration without holes of $k$ agents occupying at least $3$ nodes. After any number of moves according to Algorithm {\sc RV-Mesh}, the resulting configuration is also connected without holes. Furthermore, the number of occupied nodes will strictly decrease after  at most $k$ edge traversals, reaching the value of only one occupied node after at most $O(k^2)$ edge traversals.
\end{lemma}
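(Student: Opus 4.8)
The plan is to establish three things in sequence: (i) preservation of the invariant ``connected without holes,'' (ii) strict decrease of the number of occupied nodes within a bounded number of traversals, and (iii) the $O(k^2)$ total bound.

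For (i), I would argue by a case analysis over the eleven local views of Fig.~\ref{mesh11}. The key observation is that every move in the algorithm sends an agent from an occupied node $u$ to an \emph{already occupied} neighbouring node; hence no new occupied node is ever created, and the only way the invariant can break is if a node $u$ becomes free and its departure either disconnects the occupied set or opens a hole. The two-hop visibility is exactly what rules this out: in each view (a)--(m), the conditions on the nodes within two hops guarantee that after $A$ (or $B$) leaves $u$, the remaining occupied neighbours of $u$ are still mutually connected through other occupied nodes, and no previously-bounded region of free nodes becomes enclosed. I would also have to handle the simultaneous-swap situations flagged in Lemma~\ref{mesh-correct-meet} (two adjacent agents exchanging positions): a swap changes no node's occupancy status, so it trivially preserves the invariant. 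The border convention (treating nonexistent nodes as free) is consistent with this analysis. This is the step I expect to be the main obstacle, because it requires checking each of the eleven views against both the connectivity and the no-holes condition, and the ``thick branch'' subtleties from the proof of Lemma~\ref{mesh-correct-start} reappear here.

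For (ii), I would reuse the structural argument from Lemma~\ref{mesh-correct-start}: as long as there are at least three occupied nodes, at least one agent has a view in Fig.~\ref{mesh11}, so some agent moves; moreover the configuration decomposes into ``generalised trees'' whose branches shrink from their ends. I would define a potential function — e.g. the number of occupied nodes, or a lexicographic combination of (number of occupied nodes, total length of branches) — and show that along any maximal sequence of moves that does not decrease the number of occupied nodes, the branch-length component strictly decreases; since branch lengths are bounded by the number $k$ of agents (more precisely by the number of occupied nodes, which is at most $k$), after at most $k$ edge traversals the number of occupied nodes must strictly drop. When only two occupied nodes remain, Lemma~\ref{mesh-correct-meet} finishes the reduction to one node in at most $k+1$ further traversals.

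For (iii), I combine (i) and (ii): the invariant from (i) lets me apply (ii) repeatedly. Starting from $k$ agents on at most $k$ occupied nodes, each ``phase'' of at most $k$ edge traversals strictly decreases the number of occupied nodes, so after at most $k$ phases we reach a single occupied node; since Lemma~\ref{mesh-correct-start} guarantees progress is always possible while more than one occupied node remains, no deadlock occurs. This gives a total of $O(k) \cdot O(k) = O(k^2)$ edge traversals, with the final two-node case absorbed into the bound via Lemma~\ref{mesh-correct-meet}. I would close by noting that asynchrony does not interfere: moves are instantaneous under the $d$-visibility assumption, operations at a node are mutually exclusive, and the only concurrency — adjacent swaps — has already been accounted for.
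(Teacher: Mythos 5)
Your overall structure (invariant preservation, then a per-phase bound of $k$ edge traversals, then $O(k^2)$ by iteration) matches the paper's, and your part (i) is essentially the paper's argument: every move goes to an already-occupied neighbour, so the occupied set only shrinks and no hole can appear, and a case check over the eleven views (including the simultaneous swaps) shows that connectivity is preserved because in every view the moving agent heads toward its only occupied neighbour in that direction, and the agents at the destination cannot move first.

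The gap is in part (ii), in the claim that the occupied-node count must strictly drop within $k$ traversals. The potential you propose --- a lexicographic pair whose second component is total branch length --- cannot do the job: branch length is itself a count of occupied nodes, so it cannot strictly decrease while the first component (the number of occupied nodes) stays fixed; worse, it is invariant under the adjacent swaps you correctly flagged in part (i), so nothing in your argument excludes an unbounded sequence of swaps, or of departures by non-last agents from multiply-occupied nodes, with no node ever being emptied. The paper closes exactly this hole with two observations you are missing: first, the only moves that are not steps toward permanently emptying some node are the position exchanges of the two top agents of a block of at least four occupied nodes (view (b) against views (c,d), and the row/column exchanges of Lemma~\ref{mesh-correct-meet}); second, each such block can swap at most once, because after the exchange the agents' two-hop views have changed and the triggering views no longer apply. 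Hence at most $(k-1)/2$ swaps can occur before one further move must be a genuine departure, and since a freed node is never reoccupied this yields the bound of $k$ edge traversals per strict decrease and $O(k^2)$ overall. You need this swap-counting argument, or a potential that strictly decreases under swaps, for part (ii) to go through.
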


\pf
Let ${\cal C}$ be any connected configuration without holes of $k$ agents occupying at least three nodes and let ${\cal C}'$ be a resulting configuration after any number of agents' moves according to Algorithm {\sc RV-Mesh}. 
First observe that since any agent can only move to an adjacent node which was occupied at the moment the agent scanned, this means that no agent will ever move to a node which was not occupied at ${\cal C}$. Hence ${\cal C}'$ consists of a subset of occupied nodes of ${\cal C}$, and since an agent which has all its neighbours occupied never moves, if ${\cal C}'$ contains holes then it would mean that ${\cal C}$ also contained holes, which is a contradiction. 

Now we show that the new configuration ${\cal C}'$ remains connected. Notice that an agent $A$ moves towards North or South only when it is adjacent to only one other 
occupied node $v$ and moves towards $v$ (see cases (h,i,l,m) of Fig.~\ref{mesh11}). Furthermore, the agents at node $v$ cannot move according to any view before $A$ moves. Hence if $A$ moves towards North or South it cannot disconnect the configuration. If an agent $A$ moves towards West (according to cases (e,f,g) of Fig.~\ref{mesh11}) or East (according to case (a) of Fig.~\ref{mesh11}) then again it cannot disconnect the configuration since $A$ has to move towards its only neighbouring occupied node and agents from that node cannot move according to any view before $A$ moves. For the remaining cases ((b,c,d) of Fig.~\ref{mesh11}) we have: i) When an agent $A$ occupies the upper leftmost position of a block of at least $4$ nodes (as shown in case (b)) it moves towards East and only the upper rightmost agent $B$ of the block could be willing to move at the same time towards West based on an earlier view (according to cases (c,d)). Then $A,B$ can move at the same time. It is easy to see that the lower two agents of the block cannot move by any view. ii) Similarly if an agent occupies the upper rightmost position of a block of at least $4$ nodes (as shown in cases (c,d)) it moves towards West and only the upper leftmost agent of the block could be willing to move at the same time towards East based on an earlier view (according to case (a)). Hence the configuration remains connected.

Finally, we show that a configuration ${\cal C}'$ that has resulted from ${\cal C}$ after at most $k$ edge traversals has strictly less occupied nodes than  ${\cal C}$. As proved before, either all co-located agents will move (one by one) according to one of the views of Fig.~\ref{mesh11} making a new free node which will never be reoccupied or at a block of at least $4$ occupied nodes the top agents will swap positions. However, according to the views of Fig.~\ref{mesh11}, this swapping may happen only once for a fixed block (since the views change). Hence after a total number of at most $(k-1)/2$ swaps and one additional move the occupied nodes in ${\cal C}'$ will be strictly less than in ${\cal C}$.
Thus after at most $O(k^2)$ edge traversals the agents will rendezvous.
\qed

\noindent In view of Lemmas~\ref{impossible-mesh-d1}, \ref{mesh-correct-start}, \ref{mesh-correct-meet}, \ref{mesh-correct-decrease} we have:
\begin{theorem}\label{mesh-correct}
The rendezvous problem for $k\geq 2$ agents can be solved 
for any initial connected without holes configuration of agents in an $n\times m$ oriented mesh if and only if the agents are able to discover the occupied nodes within a distance of two-hops.  
\end{theorem}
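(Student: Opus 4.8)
The plan is to prove the two implications separately: the ``only if'' direction is an immediate consequence of the earlier impossibility result, and the ``if'' direction follows by stitching together the three structural lemmas about Algorithm {\sc RV-Mesh}.

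For necessity, suppose the honest agents can scan only within distance one (the case of no visibility being even weaker). By Lemma~\ref{impossible-mesh-d1} there already exist connected without holes initial configurations admitting no correct algorithm: any algorithm that is willing to move an agent with exactly two adjacent occupied nodes can be driven by the adversary into a separating configuration, hence is incorrect by Lemma~\ref{impossibility-anytime}; while any algorithm that never moves such an agent deadlocks on the configuration obtained by occupying one additional node, which is locally indistinguishable. Hence two-hop visibility is necessary, and, as observed after Lemma~\ref{impossible-mesh-d1}, without it there are in fact infinitely many unsolvable instances.

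For sufficiency I would show that Algorithm {\sc RV-Mesh} solves rendezvous from every connected without holes initial configuration. The first key point is that {\sc RV-Mesh} moves an agent only onto a node that was occupied when it scanned; since the malicious agent can neither occupy an occupied node nor cross an honest agent on a link, no honest agent ever bumps into $M$, so $M$ cannot affect the execution at all and the only remaining adversarial power is the asynchronous scheduling of scan and move phases. Next, Lemma~\ref{mesh-correct-decrease} guarantees that every reachable configuration is again connected and without holes — in particular never separating — so the hypotheses of the other lemmas persist. Lemma~\ref{mesh-correct-start} then guarantees that whenever at least three nodes are occupied, at least one agent's two-hop view matches one of the eleven local configurations of Fig.~\ref{mesh11} and is therefore eligible to move, so the execution cannot deadlock above two occupied nodes; and Lemma~\ref{mesh-correct-decrease} shows the number of occupied nodes strictly decreases within at most $k$ edge traversals, so after $O(k^2)$ traversals at most two occupied nodes remain. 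Finally Lemma~\ref{mesh-correct-meet} shows that from two occupied nodes at most $k+1$ further traversals bring all agents onto a single node, where no view of Fig.~\ref{mesh11} applies, so every agent stays there forever: rendezvous is achieved and the algorithm terminates in $O(k^2)$ total moves.

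The main obstacle in this argument is ruling out livelock under asynchrony: one must show the adversary cannot prevent an eligible agent from making progress, and in particular cannot exploit \emph{stale} two-hop views to force an infinite sequence of position swaps. This is precisely what the case analyses inside Lemmas~\ref{mesh-correct-meet} and~\ref{mesh-correct-decrease} handle — a swap of the two top agents of a block of at least four nodes changes both their local views, so the same swap cannot recur, and since operations at a node are atomic (mutual exclusion, no agent scanned mid-edge) a scanned neighbour that appeared occupied is either genuinely available or is being vacated simultaneously by a coordinated swap. With this no-livelock property in hand, the potential ``number of occupied nodes'' is nonincreasing, strictly decreases every $O(k)$ traversals, and the theorem follows by induction on that potential, the base case being Lemma~\ref{mesh-correct-meet}.
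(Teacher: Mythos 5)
Your proposal is correct and follows essentially the same route as the paper, which derives the theorem directly from Lemmas~\ref{impossible-mesh-d1}, \ref{mesh-correct-start}, \ref{mesh-correct-meet} and \ref{mesh-correct-decrease} (necessity from the first, sufficiency by combining the other three). Your additional observations --- that the malicious agent never interferes because {\sc RV-Mesh} only moves agents onto already-occupied nodes, and that stale views cannot cause livelock because a swap changes both participants' views --- are implicit in the paper's lemma proofs and make the argument more explicit, but do not change the approach.
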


If the initial non separable configuration is different from the one considered above, then even the $2$-visibility capability is not sufficient anymore to solve rendezvous. In fact Fig.~\ref{mesh-holes}, demonstrates that the problem remains unsolvable for connected configurations with holes 
even when the agents are able to discover the occupied nodes within any constant distance. 
\begin{figure}
 \centering \epsfig{file=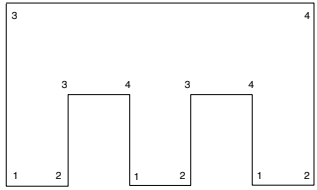, width=4.3cm}
 \caption{An initially connected configuration of agents with a hole. Even if the agents are able to discover the occupied nodes within any constant distance, this `polygonal' configuration can always be expanded so that all agents with the same number have exactly the same view and therefore at least two of them should move. However, after those moves the resulting configuration is separating and the problem is unsolvable.\label{mesh-holes}}
 \end{figure}
The problem is also unsolvable for some disconnected non separable configurations (see Fig.~\ref{mesh-discon}). 
\begin{figure}
 \centering \epsfig{file=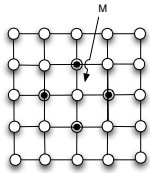, width=2.3cm}
 \caption{A disconnected non separable initial configuration of agents. The malicious agent has been placed as depicted. If any agent moves then the resulting configuration is separating and the problem is unsolvable.\label{mesh-discon}}
 \end{figure}
Hence it appears that for many initial non separable configurations in an oriented mesh, the combination of the asynchronicity and the limited view (to any constant fraction of the complete view) makes the problem unsolvable.


\section{Conclusion}
\label{conclusion}
In this paper we studied deterministic protocols for the rendezvous of $k\geq 2$ honest agents in asynchronous networks with a malicious agent which can prevent the agents from reaching any node it occupies. We have presented algorithms for oriented and unoriented ring networks  which gathers the honest agents within $O(kn)$ edge traversals for all feasible instances of the problem. We have also presented a deterministic protocol for oriented $n \times m$ meshes which leads the agents to rendezvous within $O(k^2)$ edge traversals for any initial connected without holes configuration if and only if the agents can discover the occupied nodes within a distance of two-hops.
 Given the novelty of the model there are many interesting open questions. 
The first is whether the problem can be solved in unoriented meshes for connected configurations without holes when the agents are capable of scanning within a constant distance (more than one). 
 It would be also interesting to study randomized protocols for some of the unsolvable cases, and also to study this problem in synchronous networks with unit-speed cooperating agents and unit-speed/infinite-speed malicious agents. 
Finally,  it would be interesting to study the problem in $(m+1)$-connected graphs in the presence of $m$ malicious agents, or in the solved cases presented in this paper in the presence of malicious agents that show a  more severe behaviour.

\newpage
\pagenumbering{roman}

\bibliographystyle{abbrv}
\bibliography{biblio}

\end{document}